
\documentclass[a4paper,fleqn]{cas-dc}

\usepackage[numbers]{natbib}

\def\tsc#1{\csdef{#1}{\textsc{\lowercase{#1}}\xspace}}
\tsc{WGM}
\tsc{QE}
\tsc{EP}
\tsc{PMS}
\tsc{BEC}
\tsc{DE}

\usepackage{amssymb}
\usepackage{amsthm}

\usepackage{adjustbox}
\usepackage{upquote}
\usepackage{enumerate}
\usepackage{amsmath}
\usepackage[utf8]{inputenc}
\usepackage{stackengine}
\usepackage{amsfonts}
\usepackage{mathtools}
\usepackage{arydshln}
\usepackage{footnote}
\usepackage{amssymb}
\usepackage{xcolor}
\usepackage{upquote}
\usepackage{graphicx}
\usepackage{multirow}
\usepackage{graphicx}
\usepackage{empheq}
\usepackage{tikz}
\usepackage{amssymb}
\usetikzlibrary{shapes,arrows}
  \tikzstyle{block} = [rectangle, draw, fill=green!30, 
    text width=8em, text centered, rounded corners, minimum height=7em]
\tikzstyle{block4} = [rectangle, draw, fill=green!30, 
    text width=12em, text centered, rounded corners, minimum height=8em]
\tikzstyle{block1} = [rectangle, draw, fill=red!20, 
    text width=8em, text centered, rounded corners, minimum height=7em,node distance=4cm]
\tikzstyle{block2} = [rectangle, draw=red, fill=yellow!35, 
    text width=9em, text centered, rounded corners, minimum height=3em,node distance=2cm]
\tikzstyle{block3} = [rectangle,draw, fill=yellow!25, 
    text width=15em, text centered, rounded corners, minimum height=4em,node distance=1cm]
\tikzstyle{line} = [draw,->, line width=0.5mm]
\tikzstyle{line1} = [draw,dotted,->, line width=0.7mm]
\tikzstyle{cloud} = [draw, ellipse,fill=red!35, text width=5.5em, text centered, node distance=3cm, minimum height=4.5em]
\tikzstyle{decision} = [diamond, draw, fill=blue!10, 
    text width=5em, text centered, node distance=4cm, inner sep=0pt]
\tikzstyle{line1} = [draw,dotted,->, line width=0.7mm]  

\definecolor{mypink}{rgb}{0.85, 0.25, 1}
\definecolor{mypink1}{rgb}{0.85, 0.25, 0.21}
\definecolor{mygreen}{rgb}{0.15, 0.65, 1}
\definecolor{mypink2}{RGB}{219, 48, 122}
\definecolor{mypink3}{cmyk}{0, 0.7808, 0.4429, 0.1412}
\definecolor{mygray}{gray}{0.6}

 \theoremstyle{plain}
\if@secthm
  \newtheorem{thm}{Theorem}[section]
  \@addtoreset{thm}{section}
\else
  \newtheorem{thm}{Theorem}
\fi

\newtheorem{lem}{Lemma}

\newtheorem{assum}{Assumption}

 \theoremstyle{definition}
\newtheorem{defn}{Definition}

\begin{document}
\let\WriteBookmarks\relax
\def\floatpagepagefraction{1}
\def\textpagefraction{.001}

\shorttitle{LMI-based Nonlinear FAUIO Approach for Fault Estimation}

\shortauthors{Shivaraj Mohite et~al.}

\title [mode = title]{Estimating the Faults/Attacks using a Fast Adaptive Unknown Input Observer: An Enhanced LMI Approach}                      
\tnotemark[1]



%
\author[1,2]{Shivaraj Mohite}






\affiliation[1]{organization={University of Lorraine, CRAN CNRS UMR 7039},
    addressline={186, rue de Lorraine}, 
    city={Cosnes et Romain},
    postcode={54400}, 
    country={France}}
\affiliation[2]{
    organization={
    Rhineland-Palatinate Technical University of Kaiserslautern-Landau, Germany},
    city={Kaiserslautern},
    postcode={67663}, 
    country={Germany}}

\author[3]{Adil Sheikh}[type=editor,
                        auid=000,bioid=1,
                        role=,
                        orcid=0000-0003-2678-351X]
                        \cormark[1]
\ead{adilsheikh1703@gmail.com}

\author[4]{S. R. Wagh}


\affiliation[3]{organization={Electrical Engineering Department,
St. Francis Institute of Technology (SFIT)},
    city={Mumbai},
    postcode={400103}, 
    country={India}}
\affiliation[4]{organization={Electrical Engineering Department, Veermata Jijabai Technological Institute (VJTI)},
    city={Mumbai},
    postcode={400019}, 
    country={India}}

\cortext[cor1]{Corresponding author}



\begin{abstract}
This paper deals with the problem of robust fault estimation for the Lipschitz nonlinear systems under the influence of sensor faults and actuator faults. In the proposed methodology, a descriptor system is formulated by augmenting sensor fault vectors with the states of the system.
A novel fast adaptive unknown input observer (FAUIO) structure is proposed for the simultaneous estimation of both faults and states of a class of nonlinear systems.
A new LMI condition is established by utilizing the $\mathcal{H}_\infty$ criterion to ensure the asymptotic convergence of the estimation error of the developed observer. This derived LMI condition is deduced by incorporating the reformulated Lipschitz property, a new variant of Young inequality, and the well-known linear parameter varying (LPV) approach. It is less conservative than the existing ones in the literature, and its effectiveness is compared with the other proposed approaches. Further, the proposed observer methodology is extended for the disturbance-affected nonlinear system for the purpose of the reconstruction of states and faults with optimal noise attenuation. Later on, both designed approaches are validated through an application of a single-link robotic arm manipulator in MATLAB Simulink.
\end{abstract}

\begin{keywords}
LMI approach, Fault estimation, $\mathcal{H}_\infty$ criterion, Fast adaptive unknown input observer, Lipschitz nonlinear systems
\end{keywords}

\maketitle

\section{Introduction}
\label{sec 1}

Over the last few decades, due to advancements in technology, the susceptibility of industrial systems towards 
various types of exogenous signals such as faults or attacks have increased. It leads to the degrading of the reliability and safety of these systems. Hence, the task of obtaining information regarding these faults is very crucial and it is commonly known as fault estimation. The topic of fault estimation has become a centre of attraction in the domain of control system engineering. It is widely employed in various applications, such as the satellite formation flight~\cite{2018_Fligt_AT18}, the smart grid~\cite{FAULT_est_grid}, the autonomous vehicle~\cite{fault_est_autonomous_vehicle}, and so on.

The topic of fault diagnosis is concerned with the detection of faults and the identification of their nature.  
Fault estimation is one of the important tasks in the area of fault diagnosis.
The purpose of the fault estimation methodologies is to reconstruct the fault signals that occur in the system. The authors of~\cite{gao2022fast} have classified fault diagnosis methods into the subsequent ways: 1) Analytical/model-based methods~\cite{zhang2013robust_model_based}; 2) Signal-based methods~\cite{chen2012fault_based_signal}; 3) Knowledge-based methods~\cite{li2020process_knowledge_based}; and 4) Hybrid methods~\cite{chen2023random_hybrid}. In this article, the authors are mainly focusing on model-based fault diagnosis methods. In this approach, an unknown input observer is designed to reconstruct the states of the system along with the estimation of fault signals. In~\cite{witczak2016lmi_acutator_only}, authors had developed an LMI-based unknown input observer to estimate the actuator faults along the states of the systems, whereas, the authors of~\cite{zhang2008adaptive} have deployed an adaptive observer-based approach for the estimation of sensor faults. Additionally, abundant research has been carried out in the domain of simultaneous estimation of actuator and sensor faults. One can refer to~\cite{2018_Fligt_AT18},~\cite{gao2022fast} and~\cite{makni2017robust} for more details about this. 

Recently, several methodologies based on observer have been developed in the literature, for example, adaptive observers~\cite{zhang2008adaptive,jiang2006fault_adaptive}, unknown input observers (UIOs)~\cite{2018_Fligt_AT18}, neural network-based  observers~\cite{talebi2012_neural_network}, sliding mode observers~\cite{alwi2014robust_SMO} and descriptor
system observers~\cite{gao2007actuator_descriptor}.
The adaptive observers-based scheme relies on the utilization of adaptive law, which facilitates the estimation of actuator or sensor faults. Recently, the authors of~\cite{zhang2008adaptive} proposed a fast adaptive control law for estimation purposes, which provides a fast convergence of estimated faults. Though it provides high estimation accuracy, it can not be utilised for the estimation of both sensors and actuators faults at the same time.
The UIO-based methods are simple in terms of designing, and aiding in the states under the influence of unknown inputs such as the actuator faults. However, one of the important criteria for the development of these methods is the observer-matching condition, which is very restrictive in various practical applications. In the sliding mode observer approach, the estimation accuracy of estimated faults and states is very high as compared to the other aforementioned techniques, the occurrence of the chattering phenomenon during the estimation is one of the limitations of this method. The methodology of descriptor system-based observer is widely used in the simultaneous estimation of the states and faults. However, it can be used in the case of multiplicative faults. 
Hence, there is a scope for improvement in fault estimation strategies. 

The motivation behind this article is to design a novel observer-based method that aids in the estimation of states as well as both actuator and sensor faults (irrespective of whether it is constant or time-varying in nature). The authors utilized the earlier-stated approaches for the robust fault estimation of the nonlinear systems. The authors have incorporated the fast adaptive unknown input observer (FAUIO) approach proposed in~\cite{gao2022fast} with the fast adaptive control law of~\cite{zhang2008adaptive}. Further, the authors established a novel LMI condition for the computation of the parameters of FAUIO which is based on the method of~\cite{mohite_hal-04171967}.
The proposed approach is illustrated as follows:
\begin{enumerate}
    \item A descriptor system is formulated by augmenting the states of the system and the sensor fault vectors of a nonlinear system. Further, the FAUIO-based estimation strategy is deployed on the descriptor system for the simultaneous estimation of states, sensor faults, and actuator faults.
    \item An $\mathcal{H}_\infty$ criterion-based LMI condition is developed to determine the FAUIO parameters such that the proposed observer provides the estimated states and faults. 
    \item Further, the proposed methodology is extended for the state and fault estimation of disturbance-affected nonlinear systems under the influence of sensor and actuator faults with optimal noise attenuation.
\end{enumerate}

The main contribution of this article is as follows:
\begin{enumerate}[I)]
    \item Integration of FAUIO~\cite{gao2022fast} with the fast adaptive control law: In FAUIO, a learning rate factor is integrated to enhance the convergence rate of the estimation of fault vectors.
    \item Reducing the LMI constraints:
    In literature, various LMI-based fault estimation methodologies are based on several constraints such as the definiteness of several matrices, and the existence of invertibility of matrices. One can refer to the~\cite[LMI (5)-(6)]{jiang2006fault_adaptive},~\cite[LMI (32)-(33)]{zhang2008adaptive},~\cite[LMI (66)–(67) and (34)]{gao2022fast}. However, the authors proposed a single LMI condition in this article due to the judicious utilization of the new variant of Young inequality, which reduces the computation complexity and simplifies the design process.
    \item One of the major contributions of this method is the handling of the nonlinearities of the systems. The authors of~\cite{witczak2016lmi_acutator_only, gao2022fast, li2016state} employed a standard Lipschitz property to solve the problem of nonlinearity. However, in this paper, the authors have utilized a reformulated Lipschitz property~\cite{zemouche2013lmi} along with the well-known Linear parameter varying (LPV) approach to derive a novel matrix-multiplier-based LMI condition. 
\end{enumerate}

The remainder of this article is structured in the following manner:
The prerequisites required for the observer design procedure are described in Section~\ref{sec 2 Preliminaries}. The development of the observer and the LMI condition for the nonlinear systems with linear outputs under the presence of actuator and sensor faults is showcased in Section~\ref{sec 3 NDLO}. Further, Section~\ref{sec 4 NLDO with disturbance} is dedicated to the extension of the proposed approach for the nonlinear systems with linear outputs under the influence of actuators, sensor faults, and external disturbances/noise. Through the utilization of numerical examples, the effectiveness of the developed approaches is emphasised in Section~\ref{sec num examples}. Finally, Section~\ref{sec conclusion} encompasses some concluding remarks and future prospects of the established methodology.

\textbf{Notations:}
Throughout this letter, the ensuing notations are deployed:
\begin{itemize}
\item We have utilized the subsequent  vector of the canonical basis of $\mathbb{R}^{s}$ in this article: $$e_s(i) = (\underbrace{0,\hdots,0,\overbrace{1}^{i^{\text{th}}}, 0, \hdots, 0}_{s \,\, \text{components}})^\top \in \mathbb{R}^{s}, s\geq 1.$$
\item The term $x_0$ is used to denote the initial values of $x(t)$ at $t=0$.
    \item $||x||$ and $||x||_{\mathcal{L}_2}$ represent the euclidean norm and the $\mathcal{L}_2$ norm of a vector $x$, respectively. 
    \item An identity matrix and a null matrix are described by $\mathbb{I}$ and $\mathbb{O}$, respectively. 
    \item The symbol $(\star)$ is used to showcase the repeated blocks within a symmetric matrix. 
    \item 
    $A^\dagger$ and $A^\top$ depict the Moore–Penrose inverse and the transpose of the matrix $A$, respectively.
    \item $A \in \mathbf{S}^{n}$ infers that $A $ is a symmetric matrix of dimension ${n \times n}$.
    \item  For a matrix $A \in \mathbb{R}^{n \times n}$, $A > 0$ ($A < 0$) implies that $A$ is a positive definite matrix (a negative definite matrix). Similarly, a positive semi-definite matrix (a negative semi-definite matrix) is symbolised by $A \geq 0$ ($A \leq 0$). 
    \item $A = \text{block-diag}(
A_1, \hdots,A_n)$ infers  a block-diagonal matrix having elements $A_1, \hdots, A_n$ in the diagonal.   
\end{itemize}
\section{Preliminaries}\label{sec 2 Preliminaries}
In this section, the mathematical tools and some background results are explained.

\begin{lem}[\cite{zemouche2013lmi}]\label{Lem1}
If there exists a global Lipschitz nonlinear function $h: \mathbb{R}^n \to \mathbb{R}^n$, then,
\begin{itemize}
    \item there exist functions $h_{ij} : \mathbb{R}^n \times \mathbb{R}^n \to \mathbb{R}$, and constants $\underbar{h}_{ij}$ and $\bar{h}_{ij}$, $\forall~i,j = 1,\hdots,n$, such that $\forall\,X,\,Y \in \mathbb{R}^n$,
    \begin{equation}
        h(X)-h(Y) = \sum^{n}_{i=1}
       \sum^{n}_{j=1}h_{ij} \mathcal{H}_{ij} (X-Y)\label{L 2.2},
    \end{equation}
    where $\mathcal{H}_{ij} = e_n(i)e^\top_n(j)$, and $h_{ij}  \triangleq h_{ij}(X^{Y_{j-1}},X^{Y_{j}})$. The functions $h_{ij}(.)$ are globally bounded from above and
    below as follows:
    \begin{equation}
       \underbar{h}_{ij}\leq
      h_{ij}  \leq
       \bar{h}_{ij}.\label{L 2.3}
    \end{equation}
\end{itemize}
\end{lem}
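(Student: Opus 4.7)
The plan is to establish the decomposition by a telescoping sum along the auxiliary intermediate vectors $X^{Y_j}$, each formed from $X$ by overwriting its first $j$ entries with the corresponding entries of $Y$. Thus $X^{Y_0}=X$, $X^{Y_n}=Y$, and the consecutive pair $(X^{Y_{j-1}},X^{Y_j})$ differs only in the $j$-th coordinate, where it switches from $X_j$ to $Y_j$. This intermediate-point construction is the standard device that lets one trade a global Lipschitz identity for $n^2$ scalar difference quotients, each of which varies in only one coordinate.

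First, I would fix an output component $i\in\{1,\dots,n\}$ and write the one-dimensional telescoping identity
\begin{equation*}
h_i(X)-h_i(Y) \;=\; \sum_{j=1}^{n} \bigl[h_i(X^{Y_{j-1}}) - h_i(X^{Y_j})\bigr].
\end{equation*}
Because the $j$-th summand compares two vectors that differ only in their $j$-th entry, I can define a scalar function $h_{ij}(X^{Y_{j-1}},X^{Y_j}) := \bigl(h_i(X^{Y_{j-1}}) - h_i(X^{Y_j})\bigr)/(X_j-Y_j)$ whenever $X_j\neq Y_j$, and set it to zero otherwise, so that each summand factors as $h_{ij}(X_j-Y_j)$. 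Stacking the $n$ components back together with the canonical basis gives
\begin{equation*}
h(X)-h(Y) \;=\; \sum_{i=1}^{n} e_n(i)\bigl[h_i(X)-h_i(Y)\bigr] \;=\; \sum_{i=1}^{n}\sum_{j=1}^{n} h_{ij}\, e_n(i) e_n^{\top}(j)(X-Y),
\end{equation*}
which is precisely identity \eqref{L 2.2} with $\mathcal{H}_{ij}=e_n(i)e_n^{\top}(j)$.

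For the bounds \eqref{L 2.3}, the global Lipschitz hypothesis on $h$ implies that the scalar map $t\mapsto h_i(X^{Y_{j-1}}+t\,e_n(j))$ is Lipschitz with some constant $L_{ij}$; hence the difference quotient defining $h_{ij}$ is uniformly bounded in absolute value by $L_{ij}$, and one may take $\underline{h}_{ij}=-L_{ij}$, $\bar{h}_{ij}=L_{ij}$ (or sharper partial-Lipschitz constants). The only delicate point, more of a technicality than a genuine obstacle, is the degenerate set $\{X_j=Y_j\}$ where the quotient is formally $0/0$; there $X^{Y_{j-1}}=X^{Y_j}$ as well, the corresponding summand vanishes identically, and the convention $h_{ij}:=0$ preserves both the identity and the bounds. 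The main conceptual content is therefore the intermediate-point telescoping; everything else is bookkeeping with canonical-basis matrices.
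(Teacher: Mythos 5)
Your proof is correct and is essentially the standard intermediate-point/telescoping argument from the cited reference \cite{zemouche2013lmi}; the paper itself states this lemma without proof, importing it directly from that source. Your handling of the degenerate case $X_j=Y_j$ (setting $h_{ij}=0$) and the derivation of the bounds from the coordinatewise Lipschitz constants match the standard treatment, so there is nothing to add.
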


\medskip
\begin{lem}
\label{Lem2}
Let $X, Y \in \mathbb{R}^n$ be two vectors and a matrix $Z>0 \in \mathbf{S}^{n}$. Then, the following inequality holds:
\begin{equation}\label{L 3.1}
    X^\top Y+Y^\top X \leq X^\top Z^{-1}X + Y^\top ZY.
\end{equation}
The inequality~\eqref{L 3.1} is known as the standard Young inequality.
In addition to this, the authors of~\cite{zemouche2017circle} proposed the ensuing new variant of this inequality:
\begin{equation}\label{L 3.1.1}
    X^\top Y+Y^\top X \leq \frac{1}{2}(X+ZY)^\top Z^{-1}(X+ZY).
\end{equation}

\end{lem}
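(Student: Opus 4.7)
The plan is to handle the two inequalities separately, then observe that the second follows from the first by algebraic expansion. Throughout, the key fact is that since $Z>0$ is symmetric, it admits a unique symmetric square root $Z^{1/2}>0$, and hence $Z^{-1/2}$ is well-defined and symmetric.

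For the standard inequality~\eqref{L 3.1}, I would start from the obvious positive semi-definiteness observation
\begin{equation*}
    \bigl(Z^{-1/2}X - Z^{1/2}Y\bigr)^\top \bigl(Z^{-1/2}X - Z^{1/2}Y\bigr) \geq 0.
\end{equation*}
Expanding this quadratic form and using symmetry of $Z^{1/2}$ and $Z^{-1/2}$ together with $Z^{-1/2}Z^{-1/2}=Z^{-1}$ and $Z^{1/2}Z^{1/2}=Z$, the cross terms collapse to $-X^\top Y - Y^\top X$, which gives~\eqref{L 3.1} after rearrangement.

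For the variant~\eqref{L 3.1.1}, the cleanest route is to expand the right-hand side and reduce it to~\eqref{L 3.1}. Distributing,
\begin{equation*}
    \tfrac{1}{2}(X+ZY)^\top Z^{-1}(X+ZY) = \tfrac{1}{2}X^\top Z^{-1}X + \tfrac{1}{2}(X^\top Y + Y^\top X) + \tfrac{1}{2}Y^\top Z Y,
\end{equation*}
where I have used $Z^{-1}Z = ZZ^{-1} = \mathbb{I}$ and the symmetry of $Z$. Subtracting $X^\top Y+Y^\top X$ from both sides, the claim~\eqref{L 3.1.1} becomes
\begin{equation*}
    \tfrac{1}{2}(X^\top Y + Y^\top X) \leq \tfrac{1}{2} X^\top Z^{-1} X + \tfrac{1}{2} Y^\top Z Y,
\end{equation*}
which is exactly~\eqref{L 3.1} after multiplying by two.

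There is no real obstacle here; the only point worth being careful about is the use of the symmetric square root $Z^{1/2}$, which relies on $Z\in\mathbf{S}^{n}$ with $Z>0$ (both of which are hypotheses). Alternatively, one can avoid square roots altogether by noting that $(X-Z^{-1}\!\cdot\! ZY)^\top Z^{-1}(X-ZY)=(X-ZY)^\top Z^{-1}(X-ZY)\geq 0$ and expanding directly, which again yields~\eqref{L 3.1}, and then deriving~\eqref{L 3.1.1} by the same expansion argument as above.
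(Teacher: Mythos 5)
Your argument is correct. Note that the paper itself offers no proof of this lemma: inequality~\eqref{L 3.1} is stated as the classical Young relation and~\eqref{L 3.1.1} is simply cited from~\cite{zemouche2017circle}, so there is no in-paper derivation to compare against. Your two steps are the standard ones: the completion-of-squares $\bigl(Z^{-1/2}X - Z^{1/2}Y\bigr)^\top \bigl(Z^{-1/2}X - Z^{1/2}Y\bigr) \geq 0$ (or equivalently $(X-ZY)^\top Z^{-1}(X-ZY)\geq 0$, which avoids square roots entirely) yields~\eqref{L 3.1}, and your expansion of $\tfrac{1}{2}(X+ZY)^\top Z^{-1}(X+ZY)$ correctly exposes that~\eqref{L 3.1.1} is algebraically equivalent to~\eqref{L 3.1} — the cross term $\tfrac{1}{2}(X^\top Y+Y^\top X)$ appears on the right-hand side, and what remains to be shown is exactly half of the standard inequality. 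One small observation worth making explicit: this equivalence shows the ``new variant'' is not a strictly sharper bound in general; its value in the paper lies in the structural form $(X+ZY)^\top Z^{-1}(X+ZY)$, which, after a Schur complement, places the multiplier $Z$ linearly in the LMI blocks (as in~\eqref{sec3 thm1 LMI}) rather than requiring both $Z$ and $Z^{-1}$ to appear.
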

\section{A nonlinear FAUIO design}\label{sec 3 NDLO}
\subsection{Delineating the problem statement}
Consider the following set of equations which represents a class of nonlinear systems with linear outputs under the presence of actuator and sensor faults (or cyber-attacks on sensor and actuator or a combination of both):
\begin{align}\label{sec 3 eq 1}
\begin{split}
     \dot{x}  &= A x +B u
     +G g(x)+E_f f_a,\\
     y        &= C x+D_f f_s,     
\end{split}
\end{align}
where $x \in \mathbb{R}^n$ and 
$y \in \mathbb{R}^p$ are the system states, and the system's outputs respectively. $u\in \mathbb{R}^s$
is the system input.
$A \in \mathbb{R}^{n \times n}$, $C \in \mathbb{R}^{p \times n}$, $G \in \mathbb{R}^{m \times n}$,
and  $B\in \mathbb{R}^{n \times s} $ are constant matrices. The $f_a \in \mathbb{R}^{a_1}$ and $f_s \in \mathbb{R}^{a_2}$ are the actuator faults (or cyber-attacks), and sensor faults (or cyber-attacks), respectively. $E_f  \in \mathbb{R}^{n \times a_1}$ and $D_f  \in \mathbb{R}^{p \times a_2}$  are constant matrices.  The function $f: \mathbb{R}^n \to \mathbb{R}^m$ is assumed to be globally Lipschitz, and it is represented under the ensuing form:
\begin{equation}\label{sec 3 eq 2}
g(x)=\begin{bmatrix}
 g_1(H_1 x)\\\vdots\\g_i(\underbrace{H_i x}_{\bar{\nu}_i})\\\vdots\\g_m(H_m x)
\end{bmatrix},
\end{equation}
where $H_i\in\mathbb{R}^{\bar{n} \times n}$ for all $i\in \{1,\hdots,m\}$.

Let us consider that system~\eqref{sec 3 eq 1} hold the subsequent assumptions:
\begin{assum}\label{Asumption 1}
 The system~\eqref{sec 3 eq 1} is observable, i.e., the pair $(A,~C)$ is detectable.
\end{assum}
\begin{assum}\label{Asumption 2}
Without loss of generality, matrices $E_f$ and $D_f$ are full column rank~\cite{gao2022fast}. 
\end{assum}
\begin{assum}\label{Asumption 3}
The first-order derivative of actuator fault signals $f_a$ is presumed to be $\mathcal{L}_2$ bounded. 
\end{assum}

For the purpose of simultaneous  estimation of the states and attacks/faults, the subsequent descriptor system is designed by considering the augmented state vector as $\zeta=\begin{bmatrix}
x\\f_s
\end{bmatrix} \in \mathbb{R}^{n_{\text{new}}}$:
\begin{align}\label{sec 3 eq 3}
    \begin{split}
\mathcal{T} \dot{\zeta} &=
A_\zeta \zeta + B u +
G g(\mathcal{T}\zeta)+ E_f f_a,\\
y&=\bar{C} \zeta ,       
    \end{split}
\end{align}
where $\mathcal{T}=\begin{bmatrix}
\mathbb{I} & \mathbb{O}
\end{bmatrix}$, $A_\zeta = \begin{bmatrix}
A & \mathbb{O}
\end{bmatrix}$, $\bar{C} = \begin{bmatrix}
C & D_f
\end{bmatrix}$ and $n_{\text{new}}=n+a_2$. In addition to this,$g(\mathcal{T}\zeta)$ is expressed as:
\begin{equation}\label{sec 3 eq 4}
g(\mathcal{T}\zeta)=\begin{bmatrix}
 g_1(H_1 \mathcal{T}\zeta)\\\vdots\\g_i(\underbrace{H_i \mathcal{T}\zeta}_{\nu_i})\\\vdots\\g_m(H_m \mathcal{T}\zeta)
\end{bmatrix}.
\end{equation}


The observer structure for the system~\eqref{sec 3 eq 3} is given by
\begin{equation}\label{sec 3 eq 5 obs}
    \begin{split}
\dot{\eta} &= N \eta + J y + L_1 B u + L_1 G g(\mathcal{T} \hat{\zeta})+L_1 E_f \hat{f}_a,\\
\hat{\zeta}&= \eta +F y,\\
\dot{\hat{f}}_a&=\beta L_2 (\tilde{y}+\dot{\tilde{y}}),\\
     \tilde{y}&=y-\bar{C}\hat{\zeta},
    \end{split}
\end{equation}
where $\hat{\zeta}$ and $\hat{f}_a$ are estimated augmented state $\zeta$ and actuator fault $f_a$, respectively. $N\in \mathbb{R}^{n_{\text{new}} \times n_{\text{new}}}$, $J \in \mathbb{R}^{n_{\text{new}}\times (p)}$, $L_1 \in \mathbb{R}^{n_{\text{new}}\times (n)}$, $F \in \mathbb{R}^{n_{\text{new}}\times (p)}$ and $L_2 \in \mathbb{R}^{(a_1)\times p}$ are unknown constant matrices, which are also described as the observer parameters.  Additionally, a positive scalar $\beta$ is called a learning rate~\cite{zhang2008adaptive}.

We consider the subsequent definitions:
\begin{enumerate}[i)]
    \item Augmented state estimation error: $\tilde{\zeta}=\zeta-\hat{\zeta}$,
    \item Estimation error of actuator fault: $\tilde{f}_a=f_a-\hat{f}_a$.
\end{enumerate}
From these definitions, one can easily obtain
\begin{equation}\nonumber
    \tilde{\zeta}=\zeta-\eta-F\bar{C}\zeta=(\mathbb{I}-F\bar{C})\zeta -\eta.
\end{equation}
Now, let us presume the observer parameters $L_1$ and $F$ of~\eqref{sec 3 eq 5 obs} satisfy the following condition:
\begin{equation}\label{sec 3 eq 6 UIO con 1}
L_1\mathcal{T}+F\bar{C}=\mathbb{I}_{n_{\text{new}}}.
\end{equation}
It yields:
\begin{equation}\label{sec 3 eq 7}
    \tilde{\zeta}=L_1 \mathcal{T}\zeta -\eta.
\end{equation}
Further, through the utilisation of~\eqref{sec 3 eq 3} and~\eqref{sec 3 eq 5 obs}, we get
\begin{equation}\label{sec 3 eq 6 er zeta}
\begin{split}
  \dot{\tilde{\zeta}}&=N \tilde{\zeta} +(L_1 A_\zeta-NL_1 \mathcal{T}-J\bar{C})\zeta\\&+L_1 G\bigg( g(\mathcal{T}\zeta)- g(\mathcal{T} \hat{\zeta})\bigg)+
L_1 E_f\tilde{f}_a  
\end{split}  
\end{equation}
For lucidity of presentation, let us consider
\begin{equation}\label{sec 3 eq 6 UIO con 2 K}
    \mathcal{K}=J-NF,
\end{equation}
and
\begin{equation}\label{sec 3 eq 6 UIO con 2}
    L_1 A_\zeta-NL_1 \mathcal{T}-J\bar{C}=0.
\end{equation}
From~\eqref{sec 3 eq 6 UIO con 2 K} and~\eqref{sec 3 eq 6 UIO con 2}, 
\begin{equation}\label{sec 3 eq 6 UIO con 3}
    N=L_1 A_\zeta-\mathcal{K}\bar{C}.
\end{equation}
In addition to this, Lemma~\ref{Lem1} is deployed on the term $g(\mathcal{T}\zeta)- g(\mathcal{T} \hat{\zeta})$, and we achieve:
\begin{align}\label{sec 3 eq 7 g tilde}
    \begin{split}
g(\mathcal{T}\zeta)- g(\mathcal{T} \hat{\zeta})&=\sum_{i,j=1}^{m,\bar{n}} g_{ij}\mathcal{H}_{ij} H_i\mathcal{T} \Tilde{\zeta},
    \end{split}
\end{align}
where $g_{ij} \triangleq g_{ij}(\nu_i^{\hat{\nu}_{i,j-1}},\nu_i^{\hat{\nu}_{i,j}})$ and
$\mathcal{H}_{i,j} = e_{m}(i)e^\top_{\bar{n}}(j)$. The functions $g_{ij}$ fulfils:
\begin{equation}\nonumber
    {g}_{a_{ij}} \leq  g_{ij}\leq {g}_{b_{ij}},
\end{equation}
where ${g}_{a_{ij}}$ and ${g}_{b_{ij}}$ are known constants. Without loss of generality, $f_{a_{ij}}$ is considered as $f_{a_{ij}}=0$. Thus, it leads to:
\begin{equation}\label{sec 3 eq 7.1}
    0 \leq  g_{ij}\leq {g}_{b_{ij}},
\end{equation}
One can refer to~\cite{zemouche2017circle}
for more details about this.
\\By using~\eqref{sec 3 eq 6 UIO con 2},~\eqref{sec 3 eq 6 UIO con 3} and~\eqref{sec 3 eq 7 g tilde}, the error dynamic~\eqref{sec 3 eq 6 er zeta} is modified to
\begin{equation}\label{sec 3 eq 8 er zeta final}
\begin{split}
    \dot{\tilde{\zeta}}&=(L_1 A_\zeta-\mathcal{K}\bar{C}) \tilde{\zeta} +\bigg(\sum_{i,j=1}^{m,\bar{n}} g_{ij}L_1 G\mathcal{H}_{ij} H_i\mathcal{T} \bigg)\Tilde{\zeta}\\& +
L_1 E_f\tilde{f}_a.
\end{split}
\end{equation}
Further, the dynamic of fault estimation error ($\tilde{f}_{a}=f_a-\hat{f}_a$)  is illustrated as:
\begin{equation}\nonumber
        \dot{\tilde{f}}_{a}=\dot{f}_a-\dot{\hat{f}}_a=-\beta L_2 \bar{C} \Tilde{\zeta} -\beta L_2 \bar{C} \dot{\Tilde{\zeta}}+\dot{f}_a.
\end{equation}
Hence,
\begin{equation}\label{sec 3 eq 9 er fa final}
        \dot{\tilde{f}}_{a}=-\beta L_2 \bar{C} \Tilde{\zeta} -\beta L_2 \bar{C} \dot{\Tilde{\zeta}}+\dot{f}_a.
\end{equation}
In order to avoid cumbersome equations, let us introduce:
$e=\begin{bmatrix}
    \Tilde{\zeta}\\\tilde{f}_{a}
\end{bmatrix},\,\mathcal{T}_e=\begin{bmatrix}
\mathbb{I}_{n_{\text{new}}}&\mathbb{O}\\\beta L_2 \bar{C} & \mathbb{I}_{a_1}
\end{bmatrix},\,\mathcal{A}_e= \begin{bmatrix}
    L_1 A_\zeta-\mathcal{K}\Bar{C} & L_1 E_f\\-\beta L_2 \bar{C} & \mathbb{O}
    \end{bmatrix}.$
Through the utilisation of~\eqref{sec 3 eq 8 er zeta final},~\eqref{sec 3 eq 9 er fa final} and the aforementioned notations, we deduce:
\begin{equation}\small\label{sec 3 eq 10}
    \mathcal{T}_e\dot{e}=\mathcal{A}_e e+\Bigg[\sum_{i,j=1}^{m,\bar{n}}  \begin{bmatrix}
    L_1 G\mathcal{H}_{ij}\\\mathbb{O}
    \end{bmatrix}\begin{bmatrix}
    g_{ij}H_i\mathcal{T}&\mathbb{O}
    \end{bmatrix}\Bigg]e+\begin{bmatrix}
        \mathbb{O}\\ \mathbb{I}_{a_1}
    \end{bmatrix}\dot{f}_a.
\end{equation}
It leads to
\begin{equation}\label{sec 3 eq 11}
\begin{split}
    \dot{e}&=\mathcal{T}_e^{-1}\mathcal{A}_e e+\Bigg[\sum_{i,j=1}^{m,\bar{n}} \mathcal{T}_e^{-1} \begin{bmatrix}
    L_1 G\mathcal{H}_{ij}\\\mathbb{O}
    \end{bmatrix}\begin{bmatrix}
    g_{ij}H_i\mathcal{T}&\mathbb{O}
\end{bmatrix}\Bigg]e\\&+\mathcal{T}_e^{-1}\begin{bmatrix}
        \mathbb{O}\\ \mathbb{I}_{a_1}
    \end{bmatrix}\dot{f}_a,
\end{split}
\end{equation}
where $\mathcal{T}_e^{-1}=\begin{bmatrix}
    \mathbb{I}_{n_{\text{new}}}&\mathbb{O}\\-\beta L_2 \bar{C} & \mathbb{I}_{a_1}
\end{bmatrix}$. 
Let us consider the following matrices:
\begin{equation}\scriptsize\label{sec 3 eq 12}
\tilde{\mathcal{A}}_e=\mathcal{T}_e^{-1}\mathcal{A}_e=\begin{bmatrix}
 L_1 A_\zeta-\mathcal{K}\Bar{C} & L_1 E_f \\
 -\beta L_2 \bar{C} L_1 A_\zeta+\beta L_2 \bar{C} \mathcal{K} \bar{C}-\beta L_2 \bar{C} & -\beta L_2 \bar{C} L_1 E_f
\end{bmatrix},
\end{equation}
and
\begin{equation}\label{sec 3 eq 13}
   \mathcal{T}_e^{-1} \begin{bmatrix}
    L_1 G\mathcal{H}_{ij}\\\mathbb{O}
    \end{bmatrix}=\begin{bmatrix}
   L_1 G\mathcal{H}_{ij}\\
    -\beta L_2 \bar{C} L_1 G \mathcal{H}_{ij}
    \end{bmatrix}.
\end{equation}
Thus, we get:
\begin{equation}\label{sec 3 eq 14}
\begin{split}
\dot{e}&=\tilde{\mathcal{A}}_e e+\Bigg[\sum_{i,j=1}^{m,\bar{n}} \begin{bmatrix}
   L_1 G\mathcal{H}_{ij}\\
    -\beta L_2 \bar{C} L_1 G \mathcal{H}_{ij}
    \end{bmatrix}\begin{bmatrix}
    g_{ij}H_i\mathcal{T}&\mathbb{O}
\end{bmatrix}\Bigg]e\\&+\begin{bmatrix}
        \mathbb{O}\\ \mathbb{I}_{a_1}
    \end{bmatrix}\dot{f}_a.    
\end{split}
\end{equation}
According to~\cite{gao2022fast}, the necessary conditions for the existence of FAUIO~\eqref{sec 3 eq 5 obs} are:
\begin{enumerate}[I)]
    \item $(L_1A_{\eta},\Bar{C})$ must be detectable, i.e.,
\begin{equation}\label{sec 3 rank con 1}
        \text{Rank of}
        \begin{bmatrix}
        L_1A_\zeta-\lambda \mathbb{I}\\
        \Bar{C}
\end{bmatrix}=n_\text{new}\quad \forall \lambda \in \mathbb{C}^n.
    \end{equation}
    \item The matrix 
    $\begin{bmatrix}
    L_1A_\zeta & L_1E_f\\
    \Bar{C}&\mathbb{O}
    \end{bmatrix}$ is a full-column matrix. In other words, $(L_1A_{\zeta},L_1E_f,
    \Bar{C})$ does not have invariant zeros at the origin, that is,
    \begin{equation}\label{sec 3 rank con 2}
      \text{Rank of} \begin{bmatrix}
    L_1A_\eta & L_1E_f\\
\Bar{C}&\mathbb{O}\end{bmatrix}=\underbrace{n_\text{new}+a_1}_{n_{a_1}}
    \end{equation}
\end{enumerate}
Due to the fact that $\bar{C}$ is a full column matrix, conditions~\eqref{sec 3 rank con 1} and~\eqref{sec 3 rank con 2} are fulfilled.

The aim of this section is to determine the observer parameters so that the error dynamics~\eqref{sec 3 eq 14} satisfies the subsequent $\mathcal{H}_\infty$ criterion:
\begin{equation}\label{sec 3 eq 15 H infinity}
  ||e||_{\mathcal{L}^{n_{a_1}}_2}  \leq \sqrt{\nu ||e_0||^2+\mu  ||\dot{f}_a||^2_{\mathcal{L}^{a_1}_2}},
\end{equation}
where $\mu,\nu > 0$. The term $\sqrt{\mu}$ denotes the disturbance attenuation level.

\subsection{LMI formulation}
Let us consider the ensuing quadratic Lyapunov function for $\mathcal{H}_\infty$ stability analysis of~\eqref{sec 3 eq 14}:
\begin{equation}\label{sec 3 eq LY fun}
    V(e)= e^\top P e,
\end{equation}
where $P=\begin{bmatrix}
    P_1 & \mathbb{O}\\\mathbb{O}&\beta^{-1}P_2
    \end{bmatrix},\,\,
    P_1^\top=P_1>0\in \mathbb{R}^{n_{\text{new}}\times n_{\text{new}}},P_2^\top=P_2>0\in \mathbb{R}^{a_1 \times a_1}$.
\\The derivative of $V(e)$ is computed along the trajectories of~\eqref{sec 3 eq 14}, and given by
\begin{equation}\scriptsize\label{sec 3 eq 15 vdot}
    \begin{split}
    \dot{V}(e)&=   e^\top \Bigg[\Tilde{\mathcal{A}}_e^\top P+P\Tilde{\mathcal{A}}_e\Bigg]e
    +e^\top \Bigg[\sum_{i,j=1}^{m,\bar{n}}
P\begin{bmatrix}
   L_1 G\mathcal{H}_{ij}\\
    -\beta L_2 \bar{C} L_1 G \mathcal{H}_{ij}
    \end{bmatrix}\begin{bmatrix}
    g_{ij}H_i\mathcal{T}&\mathbb{O}
\end{bmatrix}\\&+\bigg(P\begin{bmatrix}
   L_1 G\mathcal{H}_{ij}\\
    -\beta L_2 \bar{C} L_1 G \mathcal{H}_{ij}
    \end{bmatrix}\begin{bmatrix}
    g_{ij}H_i\mathcal{T}&\mathbb{O}
\end{bmatrix}\bigg)^\top\Bigg]e\\&+e^\top P \begin{bmatrix}
        \mathbb{O}\\ \mathbb{I}_{a_1}
\end{bmatrix}\dot{f}_a+\dot{f}_a^\top \begin{bmatrix}
        \mathbb{O}\\ \mathbb{I}_{a_1}
    \end{bmatrix}^\top P e.
    \end{split}
\end{equation}
According to~\cite{zemouche2009unified}, the error dynamic~\eqref{sec 3 eq 14} satisfy the $\mathcal{H}_\infty$ criterion~\eqref{sec 3 eq 15 H infinity} if it admits a Lyapunov function~\eqref{sec 3 eq LY fun} that fulfills 
\begin{equation}\label{sec 4 eq 16}
  \mathcal{W} \triangleq  \dot{V}(e)+||e||^2-\mu ||\dot{f}_a||^2 \leq 0.
\end{equation}
Through the utilisation of~\eqref{sec 3 eq 15 vdot} and~\eqref{sec 4 eq 16}, $\mathcal{W}\leq 0$ if
\begin{equation}\label{sec 4 eq 17 w con}
    \begin{bmatrix}
\mathcal{A}_n &\Sigma_{q} \\
    \star &-\mu \mathbb{I}
\end{bmatrix}\leq 0,
\end{equation}
where 
\begin{equation}
\begin{split}\label{sec 4 eq 17 w con term 1}
    \mathcal{A}_n &=\Tilde{\mathcal{A}}_e^\top P+P\Tilde{\mathcal{A}}_e+\mathbb{I}_{n_{a_1}}\\&+\Bigg[\sum_{i,j=1}^{m,\bar{n}}
P\begin{bmatrix}
   L_1 G\mathcal{H}_{ij}\\
    -\beta L_2 \bar{C} L_1 G \mathcal{H}_{ij}
    \end{bmatrix}\begin{bmatrix}
    g_{ij}H_i\mathcal{T}&\mathbb{O}
\end{bmatrix}\\&+\bigg(P\begin{bmatrix}
   L_1 G\mathcal{H}_{ij}\\
    -\beta L_2 \bar{C} L_1 G \mathcal{H}_{ij}
    \end{bmatrix}\begin{bmatrix}
    g_{ij}H_i\mathcal{T}&\mathbb{O}
\end{bmatrix}\bigg)^\top\Bigg],
\end{split}
\end{equation}
and
\begin{equation}\label{sec 4 eq 17 w con term 2}
    \Sigma_{q}=P \begin{bmatrix}
        \mathbb{O}\\ \mathbb{I}_{a_1}
\end{bmatrix}=\begin{bmatrix}
        \mathbb{O}\\ \beta^{-1}P_2
\end{bmatrix}.
\end{equation}
Further,
\begin{equation}
    \begin{split}\label{sec 4 eq 18 Ae1}
\Tilde{\mathcal{A}}_e^\top P+P \Tilde{\mathcal{A}}_e+\mathbb{I}&= \Sigma_{11}+\begin{bmatrix}
    \mathbb{O}&(R_2^\top \bar{C} \mathcal{K}\bar{C})^\top\\
    \star&\mathbb{O}
\end{bmatrix},
    \end{split}
\end{equation}
where
$R_1^\top=P_1 \mathcal{K}$; $R_2^\top=P_2 L_2$ and $\Sigma_{11}$ is illustrated in~\eqref{sec 4 eq 18 sigma11} 
\begin{table*}[!ht]
\begin{equation}\label{sec 4 eq 18 sigma11}
\Sigma_{11}=\begin{bmatrix}
 P_1 L_1 A_\zeta+(P_1 L_1 A_\zeta)^\top-R_1^\top \bar{C} -\bar{C}^\top R_1 & P_1 L_1 E_f- (R_2^\top \bar{C} L_1 A_\zeta)^\top- (R_2^\top \bar{C})^\top \\
 (P_1 L_1 E_f)^\top- R_2^\top \bar{C} L_1 A_\zeta- R_2^\top \bar{C} & -R_2^\top \bar{C} L_1 E_f-(R_2^\top\bar{C} L_1 E_f)^\top
\end{bmatrix}+\mathbb{I}, 
\end{equation}
\hrule
\end{table*}
\\Additionally,
\begin{equation}\label{sec 4 eq 18 Ae1 1}
    \begin{split}
 \begin{bmatrix}
    \mathbb{O}&(R_2^\top \bar{C} \mathcal{K}\bar{C})^\top\\
     \star&\mathbb{O}
\end{bmatrix}= \underbrace{\begin{bmatrix}
\mathbb{O}\\R_2^\top \bar{C}
\end{bmatrix}}_{\mathbb{M}^\top}\overbrace{P_1^{-1}
\begin{bmatrix}
R_1^\top \bar{C}&\mathbb{O}
\end{bmatrix}}^{\mathbb{N}_1}+\mathbb{N}_1^\top \mathbb{M} 
    \end{split}
\end{equation}
We derive the ensuing inequality by deploying~\eqref{L 3.1.1} on~\eqref{sec 4 eq 18 Ae1 1}:
\begin{equation}\label{sec 4 eq 18 Ae1 1f}
 \mathbb{N}_1^\top \mathbb{M} +
 \mathbb{M}^\top \mathbb{N}_1\leq (\mathbb{M}+ \epsilon \mathbb{N})^\top (-2\epsilon P_1)^{-1}(\mathbb{M}+ \epsilon \mathbb{N}),     
\end{equation}
where 
$\mathbb{N}=P_1 \mathbb{N}_1=\begin{bmatrix}
    R^\top \bar{C} & \mathbb{O}
\end{bmatrix}$
and $\epsilon$ is a positive scalar.

From~\eqref{sec 4 eq 18 sigma11} and~\eqref{sec 4 eq 18 Ae1 1f}, one can rewrite~\eqref{sec 4 eq 18 Ae1} as:
\begin{equation}\label{sec 4 eq 18 wcon Ae1}
 \Tilde{\mathcal{A}}_e^\top P+P \Tilde{\mathcal{A}}_e+\mathbb{I}\leq \Sigma_{11}+ (\mathbb{M}+ \epsilon \mathbb{N})^\top (-2\epsilon P_1)^{-1}(\mathbb{M}+ \epsilon \mathbb{N}).  
\end{equation}
Let us define
\begin{equation}\label{sec 4 eq 17 NL1 1}
\begin{split}
 \mathbf{NL}_1&= \Bigg[\sum_{i,j=1}^{m,\bar{n}}
P\begin{bmatrix}
   L_1 G\mathcal{H}_{ij}\\
    -\beta L_2 \bar{C} L_1 G \mathcal{H}_{ij}
    \end{bmatrix}\begin{bmatrix}
    g_{ij}H_i\mathcal{T}&\mathbb{O}
\end{bmatrix}\\&+\bigg(P\begin{bmatrix}
   L_1 G\mathcal{H}_{ij}\\
    -\beta L_2 \bar{C} L_1 G \mathcal{H}_{ij}
    \end{bmatrix}\begin{bmatrix}
    g_{ij}H_i\mathcal{T}&\mathbb{O}
\end{bmatrix}\bigg)^\top\Bigg]   
\end{split}
\end{equation}
Furthermore,
\begin{equation}\label{sec 4 eq 17 NL1}
    \mathbf{NL}_1=\sum_{i,j=1}^{m,\bar{n}}
\underbrace{
 \begin{bmatrix}
   P_1 L_1 G\mathcal{H}_{ij}\\
    - R_2^\top \bar{C} L_1 G \mathcal{H}_{ij}
\end{bmatrix}}_{\mathbb{X}_{ij}^\top} \overbrace{g_{ij}\underbrace{
  \begin{bmatrix}
    H_i\mathcal{T}&\mathbb{O}
\end{bmatrix}}_{\mathbb{H}_i}}^{\mathbb{Y}_{ij}}+\mathbb{Y}_{ij}^\top \mathbb{X}_{ij}
\end{equation}
Now, the subsequent notations are introduced for the lucidity of the presentation:
\begin{equation}\small\label{sec 4 eq 18 X}
    \mathbb{X}=\begin{bmatrix}
\mathbb{X}_{11}^\top&\mathbb{X}_{12}^\top&\hdots&\mathbb{X}_{1\bar{n}}^\top&\hdots&\mathbb{X}_{m1}^\top&\hdots&\mathbb{X}_{m\bar{n}}^\top
\end{bmatrix}^\top,
\end{equation}
\begin{equation}\scriptsize\label{sec 4 eq 18 Y}
    \mathbb{Y}=\begin{bmatrix}
\mathbb{Y}_{11}^\top&\mathbb{Y}_{12}^\top&\hdots&\mathbb{Y}_{1\bar{n}}^\top&\hdots&\mathbb{Y}_{m1}^\top&\hdots&\mathbb{Y}_{m\bar{n}}^\top
    \end{bmatrix}^\top=\mathbb{H}\Phi,
\end{equation}
where,
\begin{equation}\small\label{sec 4 eq 18 H}
\mathbb{H}=    \text{block-diag}\bigg(\mathbb{H}_1,\mathbb{H}_1,\hdots,\mathbb{H}_1,\hdots,
\mathbb{H}_m,\hdots,\mathbb{H}_m\bigg),
\end{equation}
\begin{equation}\scriptsize\label{sec 4 eq 18 phi}
 \Phi=   \begin{bmatrix}
g_{11}\mathbf{I}_{\bar{n}}&g_{12}\mathbf{I}_{\bar{n}}& \hdots&
      g_{1\bar{n}}\mathbf{I}_{\bar{n}}& \hdots & 
      g_{m1}\mathbf{I}_{\bar{n}}& \hdots & g_{m\bar{n}}\mathbf{I}_{\bar{n}}
    \end{bmatrix}.
\end{equation}
By using all these aforementioned notations, $\mathbf{NL}_1$ is rewritten as:
\begin{equation}\label{sec 3 eq 19 NL1}
    \mathbf{NL}_1=\mathbb{X}^\top (\mathbb{H} \Phi)+(\mathbb{H} \Phi)^\top \mathbb{X}.
\end{equation}
\begin{table*}[!h]
\begin{equation}\scriptsize
\label{sec 4 EQN 11 Z}
\mathbb{Z} =   \left[
    \begin{array}{c c c c c c c c c c c c c}
    Z_{11} & Z_{a^1_{12}} & \hdots&Z_{a^1_{1\bar{n}}} & \textcolor{red}{Z_{b^{11}_{21}}} & \textcolor{red}{Z_{{b}^{11}_{22}}}  &\textcolor{red}{\hdots} &\textcolor{red}{Z_{{b}^{11}_{2\bar{n}}}}   &\textcolor{green}{\hdots}& \textcolor{mypink}{Z_{b^{11}_{m1}}}  & \textcolor{mypink}{Z_{{b}^{11}_{m2}}} &\textcolor{mypink}{\hdots} & \textcolor{mypink}{Z_{{b}^{11}_{m\bar{n}}}}
    \\
    Z_{a^1_{12}} & Z_{12} & \hdots &Z_{a^2_{1\bar{n}}}&\textcolor{red}{Z_{{b}^{12}_{21}}}& \textcolor{red}{Z_{{b}^{12}_{22}}} &\textcolor{red}{\hdots} & \textcolor{red}{Z_{{b}^{12}_{2\bar{n}}}} &\textcolor{green}{\hdots} &
    \textcolor{mypink}{Z_{{b}^{12}_{m1}}}& \textcolor{mypink}{Z_{{b}^{12}_{m2}}}&\textcolor{mypink}{\hdots} & \textcolor{mypink}{Z_{{b}^{12}_{m\bar{n}}}}\\
\vdots&\vdots&\ddots&\vdots&\textcolor{red}{\vdots}&\textcolor{red}{\vdots}&\textcolor{red}{\ddots}&\textcolor{red}{\vdots}&\textcolor{green}{\hdots} &\textcolor{mypink}{\vdots}&\textcolor{mypink}{\vdots}&\textcolor{mypink}{\ddots}&\textcolor{mypink}{\vdots}\\
Z_{a^1_{1\bar{n}}} & Z_{a^2_{1\bar{n}}} & \hdots & Z_{1\bar{n}}&\textcolor{red}{Z_{{b}^{1\bar{n}}_{21}}}& \textcolor{red}{Z_{{b}^{1\bar{n}}_{22}}}
&\textcolor{red}{\hdots} & \textcolor{red}{Z_{{b}^{1\bar{n}}_{2\bar{n}}}}&\textcolor{green}{\hdots} &
\textcolor{mypink}{Z_{{b}^{1\bar{n}}_{m1}}}& \textcolor{mypink}{Z_{{b}^{1\bar{n}}_{m2}}}
&\textcolor{mypink}{\hdots} & \textcolor{mypink}{Z_{{b}^{1\bar{n}}_{m\bar{n}}}}
\\
\textcolor{red}{Z_{b^{11}_{21}}} & \textcolor{red}{Z_{b^{12}_{21}}}  &\textcolor{red}{\hdots} &\textcolor{red}{Z_{{b}^{1\bar{n}}_{21}}}& Z_{21} &   Z_{a^1_{22} }& \hdots& Z_{a^1_{2\bar{n}}}&\textcolor{green}{\hdots}& \textcolor{mypink1}{Z_{{b}^{21}_{m1}}} & \textcolor{mypink1}{Z_{{b}^{21}_{m2}}}
&\textcolor{mypink1}{\hdots} & \textcolor{mypink1}{Z_{{b}^{21}_{m\bar{n}}}}\\
\textcolor{red}{Z_{b^{11}_{22}}} & \textcolor{red}{Z_{b^{12}_{22}}}  &\textcolor{red}{\hdots} &\textcolor{red}{Z_{{b}^{1\bar{n}}_{22}}}& Z_{a^1_{22}}&  Z_{22} &\hdots &Z_{a^2_{2\bar{n}}} &\textcolor{green}{\hdots} &
\textcolor{mypink1}{Z_{{b}^{22}_{m1}}} & \textcolor{mypink1}{Z_{{b}^{22}_{m2}}}
&\textcolor{mypink1}{\hdots} & \textcolor{mypink1}{Z_{{b}^{22}_{m\bar{n}}}}  \\
\textcolor{red}{\vdots}&
\textcolor{red}{\vdots}&
\textcolor{red}{\ddots}&
\textcolor{red}{\vdots}&\vdots&\vdots&\ddots&\vdots&\textcolor{green}{\hdots}&\textcolor{mypink1}{\vdots}&
\textcolor{mypink1}{\vdots}&
\textcolor{mypink1}{\ddots}&
\textcolor{mypink1}{\vdots}\\
\textcolor{red}{Z_{b^{11}_{2\bar{n}}}} & \textcolor{red}{Z_{b^{12}_{2\bar{n}}}}  &\textcolor{red}{\hdots} &\textcolor{red}{Z_{{b}^{1\bar{n}}_{2\bar{n}}}}& 
 Z_{a^1_{2\bar{n}}}&
 Z_{a^2_{2\bar{n}}}&\hdots&  Z_{2\bar{n}}&\textcolor{green}{\hdots}&\textcolor{mypink1}{Z_{{b}^{2\bar{n}}_{m1}}} & \textcolor{mypink1}{Z_{{b}^{2\bar{n}}_{m2}}}
&\textcolor{mypink1}{\hdots} & \textcolor{mypink1}{Z_{{b}^{2\bar{n}}_{m\bar{n}}}}
\\
\textcolor{green}{\vdots}&
\textcolor{green}{\vdots}&
\textcolor{green}{\vdots}&
\textcolor{green}{\vdots}&
\textcolor{green}{\vdots}&
\textcolor{green}{\vdots}&
\textcolor{green}{\vdots}&
\textcolor{green}{\vdots}&\ddots&
\textcolor{green}{\vdots}&
\textcolor{green}{\vdots}&
\textcolor{green}{\vdots}&
\textcolor{green}{\vdots}\\
\textcolor{mypink}{Z_{b^{11}_{m1}}} & \textcolor{mypink}{Z_{b^{12}_{m1}}}  &\textcolor{mypink}{\hdots} &\textcolor{mypink}{Z_{{b}^{1\bar{n}}_{m1}}}&\textcolor{mypink1}{Z_{b^{21}_{m1}}} & \textcolor{mypink1}{Z_{b^{22}_{m1}}}  &\textcolor{mypink1}{\hdots} &\textcolor{mypink1}{Z_{{b}^{2\bar{n}}_{m1}}}&\textcolor{green}{\hdots} & Z_{m1} &  Z_{a^1_{m2}} &\hdots &   Z_{a^1_{m\bar{n}}}\\
\textcolor{mypink}{Z_{b^{11}_{m2}}} & \textcolor{mypink}{Z_{b^{12}_{m2}}}  &\textcolor{mypink}{\hdots} &\textcolor{mypink}{Z_{{b}^{1\bar{n}}_{m2}}}&\textcolor{mypink1}{Z_{b^{21}_{m2}}} & \textcolor{mypink1}{Z_{b^{22}_{m2}}}  &\textcolor{mypink1}{\hdots} &\textcolor{mypink1}{Z_{{b}^{2\bar{n}}_{m2}}}&\textcolor{green}{\hdots} &
 Z_{a^1_{m2}}&  Z_{m2} &\hdots &  Z_{a^2_{m\bar{n}}}\\
\textcolor{mypink}{\vdots}&
\textcolor{mypink}{\vdots}&
\textcolor{mypink}{\ddots}&
\textcolor{mypink}{\vdots}&\textcolor{mypink1}{\vdots}&
\textcolor{mypink1}{\vdots}&
\textcolor{mypink1}{\ddots}&
\textcolor{mypink1}{\vdots}&\textcolor{green}{\hdots}&\vdots&\vdots&\ddots&\vdots\\
\textcolor{mypink}{Z_{b^{11}_{m\bar{n}}}} & \textcolor{mypink}{Z_{b^{12}_{m\bar{n}}}}  &\textcolor{mypink}{\hdots} &\textcolor{mypink}{Z_{{b}^{1\bar{n}}_{m\bar{n}}}}&\textcolor{mypink1}{Z_{b^{21}_{m\bar{n}}}} & \textcolor{mypink1}{Z_{b^{22}_{m\bar{n}}}}  &\textcolor{mypink1}{\hdots} &\textcolor{mypink1}{Z_{{b}^{2\bar{n}}_{m\bar{n}}}}&\textcolor{green}{\hdots}&
 Z_{a^1_{m\bar{n}}}& Z_{a^2_{m\bar{n}}}
&\hdots &  Z_{m\bar{n}}
      \end{array}
\right],
\end{equation}
 \begin{flushleft} 
where 
$Z_{ij}=Z_{ij}^\top>0\in \mathbb{R}^{\bar{n}\times \bar{n}},~Z_{a^k_{ij}}=Z_{a^k_{ij}}^\top\geq0\in \mathbb{R}^{\bar{n}\times \bar{n}}\forall i,k\in \{1,\hdots,m\}, \& j\in \{1,\hdots,\bar{n}\}$;$~Z_{b^{kj}_{ij}}=Z_{b^{kj}_{ij}}^\top\geq0\in \mathbb{R}^{\bar{n}\times \bar{n}},\forall i\in \{2,\hdots,m\},k\in \{1,\hdots,m-1\}, \& j\in \{1,\hdots,\bar{n}\}$ such that $\mathbb{Z}>0$.
\end{flushleft}
\hrule
\end{table*}
The following inequality is derived by deploying
a new variant of Young inequality~\eqref{L 3.1.1} on~\eqref{sec 3 eq 19 NL1}:
\begin{equation}\label{sec 3 eq 20 NL1}
    \mathbf{NL}_1\leq \frac{1}{2}\big(\mathbb{X}+\mathbb{Z}\mathbb{H} \Phi\big)^\top (\mathbb{Z})^{-1}\big(\mathbb{X}+\mathbb{Z}\mathbb{H} \Phi\big),
\end{equation}
where $\mathbb{Z}=\mathbb{Z}^\top>0$ and it is specified in~\eqref{sec 4 EQN 11 Z}.

Each element of $\Phi$ satisfies the boundary condition specified in~\eqref{sec 3 eq 7.1}. It infers that every component inside $\Phi$ belongs to a convex set $\mathcal{G}_m$ which is defined as follows:
\begin{equation}\small\nonumber
    \mathcal{G}_m\triangleq\big\{ \Phi : 0\leq g_{ij}\leq g_{b_{ij}}, \forall i\in \{1,\hdots,m\}\,\&\,j\in \{1,\hdots,\bar{n}\}\big\}.
\end{equation}
The set of vertices for $\mathcal{G}_m$ is described as:
\begin{equation}\scriptsize\label{sec 3 eq 22 GM} 
\mathcal{V}_{\phi}=\bigg\{
\{\mathcal{G}_{11},
\hdots,\mathcal{G}_{1\bar{n}},\hdots,\mathcal{G}_{m1},\hdots,\mathcal{G}_{m\bar{n}}\} : \mathcal{G}_{ij} \in [0,f_{b_{ij}}]
  \bigg\} .        
\end{equation}
It leads to:
\begin{equation}\label{sec 3 eq 20 NL1.1}
    \mathbf{NL}_1\leq \Bigg[\big(\mathbb{X}+\mathbb{Z}\mathbb{H} \Phi\big)^\top (2\mathbb{Z})^{-1}\big(\mathbb{X}+\mathbb{Z}\mathbb{H} \Phi\big)\Bigg]_{\Phi\in \mathcal{V}_{\phi}},
\end{equation}
Hence,~\eqref{sec 4 eq 17 w con term 1} is modified as:
\begin{equation}\label{sec 4 eq 17 w con term 1 fi}
\begin{split}
    \mathcal{A}_n&\leq \Sigma_{11}+ (\mathbb{M}+ \epsilon \mathbb{N})^\top (-2\epsilon P_1)^{-1}(\mathbb{M}+ \epsilon \mathbb{N})  \\&+\Bigg[\big(\mathbb{X}+\mathbb{Z}\mathbb{H} \Phi\big)^\top (2\mathbb{Z})^{-1}\big(\mathbb{X}+\mathbb{Z}\mathbb{H} \Phi\big)\Bigg]_{\Phi\in \mathcal{V}_{\phi}}
\end{split}
\end{equation}
Therefore,~\eqref{sec 4 eq 17 w con} holds if
the inequality~\eqref{sec 4 eq 17 w con main} is satisfied.
\begin{table*}[!ht]
 \begin{equation}\label{sec 4 eq 17 w con main}
    \begin{bmatrix}
   \Sigma_{11}+ (\mathbb{M}+ \epsilon \mathbb{N})^\top (-2\epsilon P_1)^{-1}(\mathbb{M}+ \epsilon \mathbb{N})  +\Bigg[\big(\mathbb{X}+\mathbb{Z}\mathbb{H} \Phi\big)^\top (2\mathbb{Z})^{-1}\big(\mathbb{X}+\mathbb{Z}\mathbb{H} \Phi\big)\Bigg]&\Sigma_q\\
   \star &-\mu\mathbb{I}
    \end{bmatrix}_{\Phi\in \mathcal{V}_{\phi}}\leq 0.
\end{equation}
\hrule   
\end{table*}
Now, we are ready to state the following theorem:
\begin{thm}\label{sec 3 thm 1}
If there exists a matrix $\mathbb{Z}$ under the form~\eqref{sec 4 EQN 11 Z} along with matrices $P_1=P_1^\top>0$, $R_1$ and $R_2$ of appropriate dimensions, and positive scalars $\epsilon,\beta$ such that the subsequent optimisation problem is solvable:
\begin{equation}
    \begin{split}\label{sec3 thm1 LMI}
        \text{minimise}~\mu~\text{subject to}&\\
        \begin{bmatrix}
            \Sigma_{11}&(\mathbb{M}+ \epsilon \mathbb{N})^\top&\big(\mathbb{X}+\mathbb{Z}\mathbb{H} \Phi\big)^\top&\Sigma_q\\
            \star &-2\epsilon P_1&\mathbb{O}&\mathbb{O}\\
            \star &\star&-2\mathbb{Z}&\mathbb{O}\\
            \star &\star&\star&-\mu\mathbb{I}
        \end{bmatrix}_{\Phi\in \mathcal{V}_{\phi}}\leq 0,
    \end{split}
\end{equation}
where 
\begin{equation}\label{sec3 thm1 LMI M}
    \mathbb{M}=\begin{bmatrix}
        \mathbb{O}\\ R_2^\top\bar{C}
    \end{bmatrix},
\end{equation} and
\begin{equation}\label{sec3 thm1 LMI N}
    \mathbb{N}=\begin{bmatrix}
        \bar{C}^\top R_1\\\mathbb{O}
    \end{bmatrix}.
\end{equation}
The terms $\Sigma_{11}$, $\mathbb{X}$, $\mathbb{H} $ and $\Phi$ are described in~\eqref{sec 4 eq 18 sigma11},~\eqref{sec 4 eq 18 X},~\eqref{sec 4 eq 18 H} and~\eqref{sec 4 eq 18 phi}, respectively.
Then, the estimation error dynamic~\eqref{sec 3 eq 14} is $\mathcal{H}_\infty$ asymptotically stable.
The gain matrices $\mathcal{K}$ and $L_2$ are computed as $\mathcal{K}=P_1^{-1}R_1^\top$ and $L_2=P_2^{-1}R_2^\top$. Figure~\ref{Fig Algo Parameter computation} depicts an algorithm for determining other observer parameters.
\end{thm}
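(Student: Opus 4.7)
The plan is to establish the theorem by analyzing the $\mathcal{H}_\infty$ performance of the error dynamics~\eqref{sec 3 eq 14} through the quadratic Lyapunov function $V(e)=e^\top P e$ with the block-diagonal $P=\mathrm{block\text{-}diag}(P_1,\beta^{-1}P_2)$ already introduced in~\eqref{sec 3 eq LY fun}. By standard $\mathcal{H}_\infty$ arguments (as cited via~\cite{zemouche2009unified}), it suffices to verify that $\mathcal{W}\triangleq\dot{V}(e)+\|e\|^2-\mu\|\dot{f}_a\|^2\leq 0$ along trajectories, which, after computing $\dot{V}(e)$ along~\eqref{sec 3 eq 14}, amounts to showing the matrix inequality~\eqref{sec 4 eq 17 w con}. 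The theorem then follows by certifying that the LMI~\eqref{sec3 thm1 LMI} is a sufficient condition for~\eqref{sec 4 eq 17 w con}.

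First, I would introduce the linearising changes of variables $R_1^\top=P_1\mathcal{K}$ and $R_2^\top=P_2 L_2$ to eliminate the bilinear products $P_1\mathcal{K}$ and $P_2 L_2$ appearing throughout $P\tilde{\mathcal{A}}_e+\tilde{\mathcal{A}}_e^\top P$. This yields the affine expression $\Sigma_{11}$ defined in~\eqref{sec 4 eq 18 sigma11}, together with a residual cross-term of the form $\mathbb{M}^\top\mathbb{N}_1+\mathbb{N}_1^\top\mathbb{M}$ in~\eqref{sec 4 eq 18 Ae1 1}, which is bilinear in the decision variables through the product $R_2^\top\bar{C}\mathcal{K}\bar{C}$. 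The new variant of Young's inequality~\eqref{L 3.1.1} is then invoked with the scaling $Z=-2\epsilon P_1$ (whose sign is admissible since the resulting quadratic is added to the LMI as in~\eqref{sec 4 eq 18 Ae1 1f}), producing the upper bound~\eqref{sec 4 eq 18 wcon Ae1}. The crucial point here, and the first subtle step, is choosing the pairing $\mathbb{M},\mathbb{N}$ so that the ensuing quadratic form is expressible through the Schur complement in terms of $P_1$ itself, rather than through $P_1^{-1}$.

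Next, I would handle the Lipschitz nonlinearity by applying Lemma~\ref{Lem1} to rewrite $g(\mathcal{T}\zeta)-g(\mathcal{T}\hat{\zeta})$ as in~\eqref{sec 3 eq 7 g tilde}, giving the sum $\mathbf{NL}_1$ in~\eqref{sec 4 eq 17 NL1}. Collecting all the $g_{ij}$-dependent blocks into the compact pair $(\mathbb{X},\mathbb{H}\Phi)$ as in~\eqref{sec 4 eq 18 X}--\eqref{sec 4 eq 18 phi}, a second application of~\eqref{L 3.1.1} with the matrix weight $\mathbb{Z}$ defined in~\eqref{sec 4 EQN 11 Z} yields the bound~\eqref{sec 3 eq 20 NL1}. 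Since each $g_{ij}$ lies in the interval $[0,g_{b_{ij}}]$ by~\eqref{sec 3 eq 7.1}, the resulting quadratic form is affine in each $g_{ij}$; by the standard LPV/convexity argument, negativity on the set $\mathcal{V}_\phi$ of vertices implies negativity on the entire polytope $\mathcal{G}_m$, producing~\eqref{sec 3 eq 20 NL1.1}. Combining both bounds gives~\eqref{sec 4 eq 17 w con main}.

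Finally, I would apply Schur complements twice --- once on the quadratic form $(\mathbb{M}+\epsilon\mathbb{N})^\top(-2\epsilon P_1)^{-1}(\mathbb{M}+\epsilon\mathbb{N})$ and once on $(\mathbb{X}+\mathbb{Z}\mathbb{H}\Phi)^\top(2\mathbb{Z})^{-1}(\mathbb{X}+\mathbb{Z}\mathbb{H}\Phi)$ --- to convert~\eqref{sec 4 eq 17 w con main} into the block LMI~\eqref{sec3 thm1 LMI}. The recovery formulas $\mathcal{K}=P_1^{-1}R_1^\top$ and $L_2=P_2^{-1}R_2^\top$ are immediate inversions of the change of variables. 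The main obstacle, in my view, is not in any single manipulation but in verifying the consistency of the two Schur-complement expansions with the sign conventions required by the new Young inequality~\eqref{L 3.1.1}; in particular, one must check that $\mathbb{Z}>0$ keeps the nonlinearity bound convex in $\Phi$, so that vertex verification indeed suffices, and that the $\epsilon$-scaled term retains the correct inertia so that the lower-right block $-2\epsilon P_1$ is negative definite as required for the Schur conversion.
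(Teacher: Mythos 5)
Your proposal follows essentially the same route as the paper: the same Lyapunov function and dissipation inequality $\mathcal{W}\le 0$, the same linearising substitutions $R_1^\top=P_1\mathcal{K}$ and $R_2^\top=P_2L_2$, the same two applications of the Young-inequality variant~\eqref{L 3.1.1} (to the bilinear residual term and to the reformulated-Lipschitz term), the same vertex/convexity argument over $\mathcal{V}_{\phi}$, and the same final Schur-complement conversion --- indeed the paper's stated proof of Theorem~\ref{sec 3 thm 1} consists only of that last Schur-plus-convexity step, all of the preceding work being carried out in the surrounding text exactly as you reconstruct it. The one caveat, which you inherit from the paper rather than introduce, concerns the sign in~\eqref{sec 4 eq 18 Ae1 1f}: the Young variant with the positive weight $\epsilon P_1$ actually yields the positive-semidefinite bound $(\mathbb{M}+\epsilon\mathbb{N})^\top(2\epsilon P_1)^{-1}(\mathbb{M}+\epsilon\mathbb{N})$, and the minus sign rightly appears only in the $-2\epsilon P_1$ diagonal block created by the Schur complement, so your justification that the negative weight is ``admissible because the quadratic is added'' should be restated in that form for the argument to be airtight.
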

\begin{proof}
Schur's complement of~\eqref{sec 4 eq 17 w con main} yields the LMI~\eqref{sec3 thm1 LMI}. From convexity principle\cite{boyd1994linear}, the estimation error dynamic~\eqref{sec 3 eq 14} fulfills $\mathcal{H}_\infty$ criterion~\eqref{sec 3 eq 15 H infinity} if the LMI~\eqref{sec3 thm1 LMI} is solved for all
$\Phi \in \mathcal{V}_{\phi}$.
\end{proof}
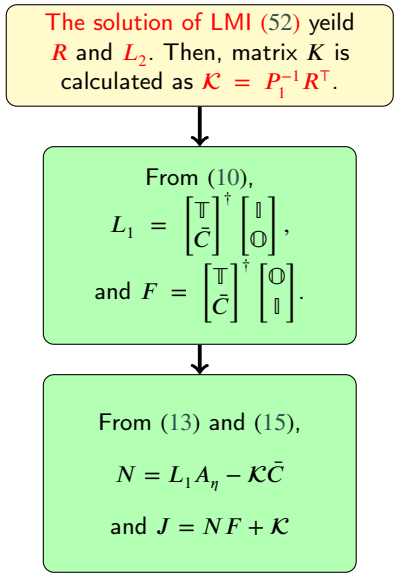
\begin{figure}[!h]
\centering
\begin{tikzpicture}
    \node [block3,node distance=2.5cm] (L2) {\textcolor{red}{The solution of LMI~\eqref{sec3 thm1 LMI}} yeild \textcolor{red}{$R$} and \textcolor{red}{$L_2$}. Then, matrix $K$ is calculated as \textcolor{red}{$\mathcal{K}=P_1^{-1}R^\top$}. };
\node[block4,below of=L2,node distance=2.5cm](L3)
{ From~\eqref{sec 3 eq 6 UIO con 1},
\\$L_1=\begin{bmatrix}
    \mathbb{T}\\\bar{C}
    \end{bmatrix}^{\dagger}\begin{bmatrix}
    \mathbb{I}\\ \mathbb{O}
    \end{bmatrix},$
    \\and~$F=\begin{bmatrix}
    \mathbb{T}\\\bar{C}
\end{bmatrix}^{\dagger}\begin{bmatrix}\mathbb{O}\\
    \mathbb{I}
    \end{bmatrix}$.
};
\node[block4, below of=L3,node distance=3cm](L4) 
{From~\eqref{sec 3 eq 6 UIO con 2 K} and~\eqref{sec 3 eq 6 UIO con 3},
$$N=L_1A_\eta-\mathcal{K} \bar{C}$$
$$\text{and}~J=NF+\mathcal{K}$$
};
    \path [line] (L2) -- (L3);
    \path [line] (L3) -- (L4);
\end{tikzpicture}
\caption{Algorithm to calculate observer parameters} 
\label{Fig Algo Parameter computation}
\end{figure}
\section{FAUIO synthesis for disturbance-affected nonlinear systems}\label{sec 4 NLDO with disturbance}
\subsection{Problem formulation}
The nonlinear system with linear outputs under the effect of external disturbances and faults is represented as follows:
\begin{align}\label{sec 4 eq 1.1}
\begin{split}
     \dot{x}  &= A x +B u
     +G g(x)+E_f f_a+E_1 \omega_1,\\
     y        &= C x+D_f f_s+D_1 \omega_2,
\end{split}
\end{align}
where $\omega_1\in \mathbb{R}^q_1$ and $\omega_2\in \mathbb{R}^q_2$ are external disturbance signals affecting both dynamics and outputs, respectively. $E_1\in \mathbb{R}^{n \times q_1}$ and $D_1\in \mathbb{R}^{p \times q_2}$ are known constant matrices. All other remaining variables and parameters of~\eqref{sec 4 eq 1} are the same as the one described in~\eqref{sec 3 eq 1}.
The function $f: \mathbb{R}^n \to \mathbb{R}^m$ is assumed to be globally Lipschitz, and it fulfils~\eqref{sec 3 eq 2}. The system~\eqref{sec 4 eq 1.1} is modified as:
\begin{align}\label{sec 4 eq 1}
\begin{split}
     \dot{x}  &= A x +B u
     +G g(x)+E_f f_a+E \omega,\\
     y        &= C x+D_f f_s+D \omega,
\end{split}
\end{align}
where, $\omega=\begin{bmatrix}
    \omega_1\\\omega_2
\end{bmatrix}$, $E=\begin{bmatrix}
    E_1 &\mathbb{O}
\end{bmatrix}$ and $E=\begin{bmatrix}
    \mathbb{O}&D_1
\end{bmatrix}$.

The system~\eqref{sec 4 eq 1} fulfil the assumptions~\ref{Asumption 1},~\ref{Asumption 2} and~\ref{Asumption 3}.
\begin{assum}\label{Asumption 4}
  In addition to these assumptions, it is presumed that the first-order derivative of disturbance $\omega$ is $\mathcal{L}_2$ bounded.
\end{assum}

Analogous to the previous section, the descriptor system for~\eqref{sec 4 eq 1} is illustrated as:
\begin{align}\label{sec 4 eq 3}
    \begin{split}
\mathcal{T} \dot{\zeta} &=
A_\zeta \zeta + B u +
G g(\mathcal{T}\zeta)+ E_f f_a+E\omega,\\
y&=\bar{C} \zeta+D\omega ,       
    \end{split}
\end{align}
where all variables and parameters remain consistent with the one specified in~\eqref{sec 3 eq 3}. 

The following FAUIO is deployed for the state estimation and fault estimation of the system~\eqref{sec 4 eq 1}:
\begin{equation}\label{sec 4 eq 5 obs}
    \begin{split}
\dot{\eta} &= N \eta + J y + L_1 B u + L_1 G g(\mathcal{T} \hat{\zeta})+L_1 E_f \hat{f}_a,\\
\hat{\zeta}&= \eta +F y,\\
\dot{\hat{f}}_a&=\beta L_2 (\tilde{y}+\dot{\tilde{y}}),\\
     \tilde{y}&=y-\bar{C}\hat{\zeta},
    \end{split}
\end{equation}
where all variables and parameters are described in~\eqref{sec 3 eq 5 obs}.
Let us consider augmented state estimation error $\tilde{\zeta}=\zeta-\hat{\zeta}$ and actuator fault estimation error $\tilde{f}_a=f_a-\hat{f}_a$. It yields:
\begin{equation}\nonumber
    \tilde{\zeta}=\zeta-\eta-F\bar{C}\zeta-F D \omega=(\mathbb{I}-F\bar{C})\zeta -\eta-F D \omega.
\end{equation}
It is assumed that the observer parameters $L_1$ and $F$ of~\eqref{sec 4 eq 5 obs} satisfy~\eqref{sec 3 eq 6 UIO con 1}, and it leads to:
\begin{equation}\label{sec 4 eq 7}
    \tilde{\zeta}=L_1 \mathcal{T}\zeta -\eta-F D \omega.
\end{equation}
Further,
\begin{equation}\small\nonumber
    \begin{split}
        \dot{\tilde{\zeta}}&=N \tilde{\zeta} +(L_1 A_\zeta-NL_1 \mathcal{T}-J\bar{C})\zeta+L_1 G\bigg( g(\mathcal{T}\zeta)- g(\mathcal{T} \hat{\zeta})\bigg)\\&+
L_1 E_f\tilde{f}_a+(L_1 E +(NF-J) D)\omega-FD\dot{\omega}.
    \end{split}
\end{equation}
From~\eqref{sec 3 eq 6 UIO con 2},~\eqref{sec 3 eq 6 UIO con 3} and~\eqref{sec 3 eq 7 g tilde}, we obtain:
\begin{equation}\small\label{sec 4 eq 8 er zeta final}
\begin{split}
  \dot{\tilde{\zeta}}&=(L_1 A_\zeta-\mathcal{K}\bar{C}) \tilde{\zeta} +\bigg(\sum_{i,j=1}^{m,\bar{n}} g_{ij}L_1 G\mathcal{H}_{ij} H_i\mathcal{T}  \bigg)\Tilde{\zeta}\\&+
L_1 E_f\tilde{f}_a+(L_1 E -\mathcal{K} D)\omega-FD\dot{\omega}.  
\end{split}
\end{equation}
Further, the fault estimation error dynamic is calculated as:
\begin{equation}\nonumber
        \dot{\tilde{f}}_{a}=-\beta L_2 \bar{C} \Tilde{\zeta} -\beta L_2 \bar{C} \dot{\Tilde{\zeta}}+\dot{f}_a-\beta L_2 D \omega -\beta L_2 D \dot{\omega}.
\end{equation}
It yields:
\begin{equation}\label{sec 4 eq 9 er fa final}
        \dot{\tilde{f}}_{a}=-\beta L_2 \bar{C} \Tilde{\zeta} -\beta L_2 \bar{C} \dot{\Tilde{\zeta}}+\dot{f}_a-\beta L_2 D \omega -\beta L_2 D \dot{\omega}.
\end{equation}
Through the utilisation of~\eqref{sec 4 eq 8 er zeta final} and~\eqref{sec 4 eq 9 er fa final}, one can deduce
\begin{equation}\label{sec 4 eq 10}
\begin{split}
\mathcal{T}_e\dot{e}&=\mathcal{A}_e e+\Bigg[\sum_{i,j=1}^{m,\bar{n}}  \begin{bmatrix}
    L_1 G\mathcal{H}_{ij}\\\mathbb{O}
    \end{bmatrix}\begin{bmatrix}
    g_{ij}H_i\mathcal{T}&\mathbb{O}
\end{bmatrix}\Bigg]e\\&+\begin{bmatrix}
L_1 E-\mathcal{K}D&-FD&\mathbb{O}\\
-\beta L_2D&-\beta L_2D&\mathbb{I}_{a_1}
    \end{bmatrix}
    \begin{bmatrix}
      \omega \\\dot{\omega}\\\dot{f}_a 
    \end{bmatrix},
\end{split}
\end{equation}
where $\mathcal{T}_e=\begin{bmatrix}
\mathbb{I}_{n_{\text{new}}}&\mathbb{O}\\\beta L_2 \bar{C} & \mathbb{I}_{a_1}
\end{bmatrix}~\text{and}~\mathcal{A}_e= \begin{bmatrix}
    L_1 A_\zeta-\mathcal{K}\Bar{C} & L_1 E_f\\-\beta L_2 \bar{C} & \mathbb{O}
    \end{bmatrix}$. It is easy to obtain: $\mathcal{T}_e^{-1}=\begin{bmatrix}
\mathbb{I}_{n_{\text{new}}}&\mathbb{O}\\-\beta L_2 \bar{C} & \mathbb{I}_{a_1}
\end{bmatrix}$.
\\Similar to~\eqref{sec 3 eq 11}, we can rewrite~\eqref{sec 4 eq 10} in the following manner:
\begin{equation}\small\label{sec 4 eq 11}
\begin{split}
\dot{e}&=\mathcal{T}_e^{-1}\mathcal{A}_e e+\Bigg[\sum_{i,j=1}^{m,\bar{n}} \mathcal{T}_e^{-1} \begin{bmatrix}
    L_1 G\mathcal{H}_{ij}\\\mathbb{O}
    \end{bmatrix}\begin{bmatrix}
    g_{ij}H_i\mathcal{T}&\mathbb{O}
\end{bmatrix}\Bigg]e\\&+\mathcal{T}_e^{-1}\begin{bmatrix}
L_1 E-\mathcal{K}D&-FD&\mathbb{O}\\
-\beta L_2D&-\beta L_2D&\mathbb{I}_{a_1}
    \end{bmatrix}
    \begin{bmatrix}
      \omega \\\dot{\omega}\\\dot{f}_a 
    \end{bmatrix}.
\end{split}
\end{equation}
Further, one can achieve:
\begin{equation}\scriptsize\nonumber
\begin{split}
\mathcal{T}_e^{-1}&\begin{bmatrix}
L_1 E-\mathcal{K}D&-FD&\mathbb{O}\\
-\beta L_2D&-\beta L_2D&\mathbb{I}_{a_1}
    \end{bmatrix}\\
    &=\begin{bmatrix}
    L_1 E-\mathcal{K}D&-FD&\mathbb{O}\\
  -\beta L_2\bar{C} (L_1 E-\mathcal{K}D)-\beta L_2 D&\beta L_2\bar{C}FD-\beta L_2 D&\mathbb{I}_{a_1}
    \end{bmatrix}
\end{split}    
\end{equation}
Let us introduce: $\bar{\omega}=\begin{bmatrix}
      \omega^\top &\dot{\omega}^\top &\dot{f}_a^\top
    \end{bmatrix}^\top$ and
\begin{equation}\scriptsize\label{sec 4 eq 12}
\mathbb{E}_\omega=\begin{bmatrix}
    L_1 E-\mathcal{K}D&-FD&\mathbb{O}\\
  -\beta L_2\bar{C} (L_1 E-\mathcal{K}D)-\beta L_2 D&\beta L_2\bar{C}FD-\beta L_2 D&\mathbb{I}_{a_1}
    \end{bmatrix}.    
\end{equation}
From~\eqref{sec 3 eq 12},~\eqref{sec 3 eq 13} and~\eqref{sec 4 eq 12}, the error dynamic~\eqref{sec 4 eq 11} is altered as:
\begin{equation}\scriptsize\label{sec 4 eq 14}
\dot{e}=\tilde{\mathcal{A}}_e e+\Bigg[\sum_{i,j=1}^{m,\bar{n}} \begin{bmatrix}
   L_1 G\mathcal{H}_{ij}\\
    -\beta L_2 \bar{C} L_1 G \mathcal{H}_{ij}
    \end{bmatrix}\begin{bmatrix}
    g_{ij}H_i\mathcal{T}&\mathbb{O}
\end{bmatrix}\Bigg]e+\mathbb{E}_\omega
    \bar{\omega}.
\end{equation}
Due to the fact that $\bar{C}$ is a full column matrix, conditions~\eqref{sec 3 rank con 1} and~\eqref{sec 3 rank con 2} are fulfilled. Hence, the necessary conditions for the existence of FAUIO~\eqref{sec 4 eq 5 obs} are satisfied.

The objective of this section is to deduce the observer parameters such that the error dynamic~\eqref{sec 4 eq 14} fulfills the following $\mathcal{H}_\infty$ criterion:
\begin{equation}\label{sec 4 eq 15 H infinity}
  ||e||_{\mathcal{L}^{n_{a_1}}_2}  \leq \sqrt{\nu ||e_0||^2+\mu  ||\bar{\omega}||^2_{\mathcal{L}^{a_1+2q}_2}},
\end{equation}
where $\mu,\nu > 0$. 
\subsection{LMI synthesis}

We will consider the quadratic Lyapunov function~\eqref{sec 3 eq LY fun} for $\mathcal{H}_\infty$ stability analysis of~\eqref{sec 4 eq 14}. The derivative of $V(e)$ along the trajectories of~\eqref{sec 4 eq 14} is illustrated as follows:
\begin{equation}\small\label{sec 4.1 eq 15 vdot}
    \begin{split}
    \dot{V}(e)&=   e^\top \Bigg[\Tilde{\mathcal{A}}_e^\top P+P\Tilde{\mathcal{A}}_e\Bigg]e+e^\top P \mathbb{E}_\omega
    \bar{\omega}+\bar{\omega}^\top \mathbb{E}_\omega^\top P e
    \\&+e^\top \Bigg[\sum_{i,j=1}^{m,\bar{n}}
P\begin{bmatrix}
   L_1 G\mathcal{H}_{ij}\\
    -\beta L_2 \bar{C} L_1 G \mathcal{H}_{ij}
    \end{bmatrix}\begin{bmatrix}
    g_{ij}H_i\mathcal{T}&\mathbb{O}
\end{bmatrix}\\&+\bigg(P\begin{bmatrix}
   L_1 G\mathcal{H}_{ij}\\
    -\beta L_2 \bar{C} L_1 G \mathcal{H}_{ij}
    \end{bmatrix}\begin{bmatrix}
    g_{ij}H_i\mathcal{T}&\mathbb{O}
\end{bmatrix}\bigg)^\top\Bigg]e.
    \end{split}
\end{equation}
From~\cite{zemouche2009unified}, one can say that the error dynamic~\eqref{sec 4 eq 14} fulfils the $\mathcal{H}_\infty$ criterion~\eqref{sec 4 eq 15 H infinity} if the ensuing inequality is true:
\begin{equation}\label{sec 4.1 eq 16}
  \mathcal{W} \triangleq  \dot{V}(e)+||e||^2-\mu ||\bar{\omega}||^2 \leq 0.
\end{equation}
By using~\eqref{sec 4.1 eq 15 vdot} and~\eqref{sec 4.1 eq 16}, $\mathcal{W}\leq 0$ if
\begin{equation}\label{sec 4.1 eq 17 w con}
    \begin{bmatrix}
\mathcal{A}_n &\mathcal{A}_{q} \\
    \star &-\mu \mathbb{I}
\end{bmatrix}\leq 0,
\end{equation}
where $\mathcal{A}_n$ is described in~\eqref{sec 4 eq 17 w con term 1} and
\begin{equation}\scriptsize\label{sec 4.1 eq 17 w con term 2}
\begin{split}
\mathcal{A}_{q}&=P\mathbb{E}_\omega\\&=\begin{bmatrix}
    P_1 L_1 E-P_1 \mathcal{K}D&-P_1 FD&\mathbb{O}\\
  -P_2 L_2\bar{C} (L_1 E-\mathcal{K}D)-P_2 L_2 D&P_2 L_2\bar{C}FD-P_2 L_2 D& \beta^{-1} P_2
    \end{bmatrix}.   
\end{split}
\end{equation}
Further, we can express $\mathcal{A}_q$ in the following manner:
\begin{equation}\scriptsize\label{sec 4.1 eq 17 w con term 2.1}
\begin{split}
\mathcal{A}_{q}&=\underbrace{\begin{bmatrix}
    P_1 L_1 E-P_1 \mathcal{K}D&-P_1 FD&\mathbb{O}\\
  -P_2 L_2\bar{C}L_1 E-P_2 L_2 D&P_2 L_2\bar{C}FD-P_2 L_2 D&\beta^{-1}P_2
\end{bmatrix}}_{\mathcal{A}^1_{{q}}}\\&+\underbrace{\begin{bmatrix}
    \mathbb{O}&\mathbb{O}&\mathbb{O}\\
  -P_2 L_2\bar{C} \mathcal{K}D&\mathbb{O}&\mathbb{O}
\end{bmatrix}}_{\mathcal{A}^2_{{q}}} 
\end{split}    
\end{equation}
Through the utilisation of~\eqref{sec 4 eq 17 w con term 1 fi} and~\eqref{sec 4.1 eq 17 w con term 2.1}, one can deduce~\eqref{sec 4.1 eq 17 w con main1} and 
\begin{equation}\small\label{sec 4.1 eqn sigma2}
    \Sigma_q=\begin{bmatrix}
    P_1 L_1 E-R_1^\top D&-P_1 FD&\mathbb{O}\\
  -R_2^\top\bar{C}L_1 E-R_2^\top D&R_2^\top\bar{C}FD-R_2^\top D&\beta^{-1} P_2
\end{bmatrix}.
\end{equation}
\begin{table*}[!ht]
  \begin{equation}\label{sec 4.1 eq 17 w con main1}
\begin{split}
  \begin{bmatrix}
   \Sigma_{11}+ (\mathbb{M}+ \epsilon \mathbb{N})^\top (-2\epsilon P_1)^{-1}(\mathbb{M}+ \epsilon \mathbb{N})  +\Bigg[\big(\mathbb{X}+\mathbb{Z}\mathbb{H} \Phi\big)^\top (2\mathbb{Z})^{-1}\big(\mathbb{X}+\mathbb{Z}\mathbb{H} \Phi\big)\Bigg]&\Sigma_q\\
   \star &-\mu\mathbb{I}
    \end{bmatrix}_{\Phi\in \mathcal{V}_{\phi}}+\begin{bmatrix}
    \mathbb{O}&\mathcal{A}^2_{{q}}\\
    \star&\mathbb{O}
    \end{bmatrix}\leq 0,  
\end{split}
\end{equation}
\hrule
\begin{equation}\scriptsize\label{sec 4.1 eq 17 w con main}
\begin{split}
  \begin{bmatrix}
   \Sigma_{11}+ (\mathbb{M}+ \epsilon \mathbb{N})^\top (-2\epsilon P_1)^{-1}(\mathbb{M}+ \epsilon \mathbb{N})  +\Bigg[\big(\mathbb{X}+\mathbb{Z}\mathbb{H} \Phi\big)^\top (2\mathbb{Z})^{-1}\big(\mathbb{X}+\mathbb{Z}\mathbb{H} \Phi\big)\Bigg]&\Sigma_q\\
   \star &-\mu\mathbb{I}
    \end{bmatrix}_{\Phi\in \mathcal{V}_{\phi}}+  (\mathbb{U}+\delta P_1 \mathbb{V})^\top(2\delta P_1)^{-1} 
(\mathbb{U}+\delta P_1 \mathbb{V}) \leq 0&.
\end{split}
\end{equation}
\hrule
\end{table*}
Further,
\begin{equation}\scriptsize
    \begin{split}
 \begin{bmatrix}
    \mathbb{O}&\mathcal{A}^2_{{q}}\\
    \star&\mathbb{O}
\end{bmatrix}&=\underbrace{\begin{bmatrix}
    \mathbb{O}\\
    \begin{bmatrix}
        \mathbb{O}\\-P_2 L_2 \bar{C}
    \end{bmatrix}
\end{bmatrix}}_{\mathbb{U}_1^\top} \overbrace{P_1^{-1}\begin{bmatrix}
    \begin{bmatrix}
        P_1 \mathcal{K}D & \mathbb{O}& \mathbb{O}
    \end{bmatrix} & \mathbb{O}
\end{bmatrix}}^{\mathbb{V}_1} + \mathbb{V}_1^\top \mathbb{U}_1    
    \end{split}
\end{equation}
By implementing~\eqref{L 3.1.1}, we get:
\begin{equation}
 \begin{bmatrix}
    \mathbb{O}&\mathcal{A}^2_{{q}}\\
    \star&\mathbb{O}
\end{bmatrix} \leq (\mathbb{U}_1+\delta P_1 \mathbb{V}_1)^\top(2\delta P_1)^{-1} 
(\mathbb{U}_1+\delta P_1 \mathbb{V}_1).
\end{equation}
Let us define $\mathbb{U}^\top=\begin{bmatrix}
    \mathbb{O}\\
    \begin{bmatrix}
        \mathbb{O}\\-R_2^\top \bar{C}
    \end{bmatrix}
\end{bmatrix}$ and $\mathbb{V}^\top=\begin{bmatrix}
    \begin{bmatrix}
         D^\top R_1 \\ \mathbb{O}\\\mathbb{O}
    \end{bmatrix}\\ \mathbb{O}
    \end{bmatrix}$.
\\Thus,
\begin{equation}
 \begin{bmatrix}\label{sec 4.1 eq 17 w con AQ1}
    \mathbb{O}&\mathcal{A}^2_{{q}}\\
    \star&\mathbb{O}
\end{bmatrix} \leq (\mathbb{U}+\delta P_1 \mathbb{V})^\top(2\delta P_1)^{-1} 
(\mathbb{U}+\delta P_1 \mathbb{V}).
\end{equation}
Therefore,~\eqref{sec 4.1 eq 17 w con main1} is reformulated into the condition~\eqref{sec 4.1 eq 17 w con main}.

After establishing the fundamental framework, we can define the following theorem.

\begin{thm}\label{sec 4 thm 1}
Let us consider the matrix $\mathbb{Z}$ under the form of~\eqref{sec 4 EQN 11 Z}. The estimation error dynamic~\eqref{sec 4 eq 14} is $\mathcal{H}_\infty$ asymptotically stable if there exist the matrices $P_1=P_1^\top>0 $  $P_2=P_2^\top>0$,  $R_1$ and $R_2$ of appropriate dimensions such that the ensuing optimization problem is solvable:  
\begin{equation}
    \begin{split}\scriptsize\label{sec4 thm2 LMI}
    \text{minimise}~\mu~\text{subject to}&\\
        \begin{bmatrix}
            \Sigma_{11}&(\mathbb{M}+ \epsilon \mathbb{N})^\top&\big(\mathbb{X}+\mathbb{Z}\mathbb{H} \Phi\big)^\top&\Sigma_q&\Sigma^1_{q}\\
            \star &-2\epsilon P_1&\mathbb{O}&\mathbb{O}&\mathbb{O}\\
            \star &\star&-2\mathbb{Z}&\mathbb{O}&\mathbb{O}\\
            \star &\star&\star&-\mu\mathbb{I}&\mathbb{O}\\
            \star &\star&\star&\star&
            -2\delta P_1\end{bmatrix}_{\Phi\in \mathcal{V}_{\phi}}\leq 0&,
    \end{split}
\end{equation}
where 
\begin{equation}
   \Sigma^1_{q}= (\mathbb{U}+\delta P_1 \mathbb{V})^\top,
\end{equation}
and $\Sigma_q$ is described in~\eqref{sec 4.1 eqn sigma2}. All other remaining variables and parameters remain consistent with those specified in Theorem~\ref{sec 3 thm 1}. The computation of other observer parameters is illustrated in Figure~\ref{Fig Algo Parameter computation}.
\end{thm}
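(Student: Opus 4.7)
The plan is to follow the template used for Theorem~\ref{sec 3 thm 1} and adapt it to accommodate the extended exogenous vector $\bar{\omega}=[\omega^\top,\dot{\omega}^\top,\dot{f}_a^\top]^\top$. First, I would take the quadratic Lyapunov function $V(e)=e^\top P e$ from \eqref{sec 3 eq LY fun} and compute its derivative along the trajectories of the error dynamics \eqref{sec 4 eq 14}, obtaining \eqref{sec 4.1 eq 15 vdot}. Then I would impose the dissipation inequality $\mathcal{W}\triangleq \dot{V}(e)+\|e\|^2-\mu\|\bar{\omega}\|^2\leq 0$ as in \eqref{sec 4.1 eq 16}, which rewrites as the block matrix condition \eqref{sec 4.1 eq 17 w con} in the variables $(\mathcal{A}_n,\mathcal{A}_q)$.

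Next, I would handle each block separately by reusing the reductions already established in Section~\ref{sec 3 NDLO}. The diagonal block $\mathcal{A}_n$ is identical to the one treated in Theorem~\ref{sec 3 thm 1}, so the cross term involving $\mathcal{K}$ is linearised via the substitutions $R_1^\top=P_1\mathcal{K}$, $R_2^\top=P_2 L_2$ and the new variant of Young's inequality \eqref{L 3.1.1} applied as in \eqref{sec 4 eq 18 Ae1 1}--\eqref{sec 4 eq 18 Ae1 1f}, while the nonlinear Lipschitz contribution $\mathbf{NL}_1$ is bounded through \eqref{sec 3 eq 20 NL1.1} on the vertex set $\mathcal{V}_\phi$. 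The novelty, and the main obstacle, lies in the off-diagonal block $\mathcal{A}_q=P\mathbb{E}_\omega$: the appearance of the bilinear product $P_2 L_2\bar{C}\mathcal{K}D$ prevents a direct linear parametrisation with $R_1$ and $R_2$.

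To overcome this, I would split $\mathcal{A}_q=\mathcal{A}_q^1+\mathcal{A}_q^2$ as in \eqref{sec 4.1 eq 17 w con term 2.1}, absorbing all affine contributions into $\Sigma_q$ from \eqref{sec 4.1 eqn sigma2}, and isolating the bilinear remainder $\mathcal{A}_q^2$. Placed in the full $(2,1)$ position of the outer block matrix, this remainder can be written as $\mathbb{U}_1^\top\mathbb{V}_1+\mathbb{V}_1^\top\mathbb{U}_1$ with an artificial $P_1^{-1}$ inserted between the factors, in complete analogy with the treatment of the $\mathcal{K}\bar{C}$ coupling in \eqref{sec 4 eq 18 Ae1 1}. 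Applying the new variant of Young's inequality \eqref{L 3.1.1} again then yields the bound \eqref{sec 4.1 eq 17 w con AQ1} involving the linear factors $\mathbb{U}$ and $\mathbb{V}$ defined through $R_1$ and $R_2$, and a free tuning scalar $\delta>0$. Combining the three bounds produces the augmented inequality \eqref{sec 4.1 eq 17 w con main}.

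Finally, I would apply Schur's complement successively to externalise the three quadratic over-bounds $(-2\epsilon P_1)^{-1}$, $(2\mathbb{Z})^{-1}$ and $(2\delta P_1)^{-1}$ as additional diagonal blocks, which transforms \eqref{sec 4.1 eq 17 w con main} into the target LMI \eqref{sec4 thm2 LMI}. Since every $g_{ij}$ enters affinely through $\Phi$ and is confined to the hyper-rectangle $\mathcal{G}_m$ described in \eqref{sec 3 eq 22 GM}, the convexity principle of \cite{boyd1994linear} ensures that checking \eqref{sec4 thm2 LMI} on the finite vertex set $\mathcal{V}_\phi$ is sufficient. Solvability of the resulting optimisation then guarantees $\mathcal{W}\leq 0$ for all admissible $\Phi$ and $\bar{\omega}$, which certifies the $\mathcal{H}_\infty$ asymptotic stability of \eqref{sec 4 eq 14}, and the gains are recovered by $\mathcal{K}=P_1^{-1}R_1^\top$, $L_2=P_2^{-1}R_2^\top$ together with the algorithm in Figure~\ref{Fig Algo Parameter computation}.
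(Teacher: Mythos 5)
Your proposal follows essentially the same route as the paper: the Lyapunov dissipation inequality \eqref{sec 4.1 eq 16}, the split $\mathcal{A}_q=\mathcal{A}_q^1+\mathcal{A}_q^2$ with the new Young inequality \eqref{L 3.1.1} absorbing the bilinear remainder via $\mathbb{U}$, $\mathbb{V}$ and the scalar $\delta$, followed by Schur complements and the vertex/convexity argument on $\mathcal{V}_\phi$. The paper's stated proof is only the final Schur-complement-plus-convexity step (the rest being developed in the preceding subsection), so your reconstruction is correct and consistent with it.
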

\begin{proof}
One can easily deduce LMI~\eqref{sec4 thm2 LMI} by implementing Schur Lemma on~\eqref{sec 4.1 eq 17 w con main}.
Through the utilisation of convexity principle\cite{boyd1994linear}, the estimation error dynamic~\eqref{sec 4 eq 14} satisfies $\mathcal{H}_\infty$ criterion~\eqref{sec 4 eq 15 H infinity} if the LMI~\eqref{sec4 thm2 LMI} is solved for all
$\Phi \in \mathcal{V}_{\phi}$.
\end{proof}
\section{Validating the observer performance}\label{sec num examples}
This section is devoted to the evaluation of the performance of the proposed methodology. 
In order to accomplish this goal, an application of a single-link flexible joint robot is used. In the first part, through the utilisation of this application, the performance of the LMI~\eqref{sec3 thm1 LMI} is validated. Later on, the effectiveness of LMI~\eqref{sec3 thm1 LMI} is showcased with the same application.

\subsection{Under the presence of sensor and fault attack}
A single-link flexible joint robot model is represented in the form of~\eqref{sec 3 eq 1} with the ensuing parameters:
\\$A=\begin{bmatrix}
    0   &  1 &    0 &  0\\
   -48.6 & -1.25 &48.6 &0\\
      0   & 0 &    0&   1\\
    19.5   &0   &-19.5 &0
\end{bmatrix}$, $B=\begin{bmatrix}
    0\\21.6\\0\\0
\end{bmatrix}$, $G=\begin{bmatrix}
    0\\0\\0\\-3.33
\end{bmatrix}$, $C=\begin{bmatrix}
    1&0&0&0\\
    0&1&0&0\\
    0&0&1&1
\end{bmatrix}$, $E_f=B$, $D_f=\begin{bmatrix}
   1\\0  \\0 
\end{bmatrix}$,  and $g(x)=sin(x_3)$. Let us presume $H_1=\begin{bmatrix}
    0 & 0& 1 & 0\\
   -1 & 1 &0  &1\\
    1 & 0 &0& -1\\
    0 &-1 &0 & 0
\end{bmatrix}$. 
\\Thus, $n=4$, $a_1=1$, $a_2=1$,  $m=1$ and $\bar{n}=4$.

The actuator and sensor faults that occurred in the system are illustrated as follows:
\begin{align}\nonumber   
f_a(t)&=\begin{cases}
0, & t\in [0,15)\\
3 \sin (0.5t) + 2 \cos (5t), & t\in [15,30)\\
0 &  t\in [30,50).\end{cases}
\end{align}
\begin{align}\nonumber   
f_s(t)=\begin{cases}
0, & t\in [0,5)\\
5-0.5(t-20), & t\in [5,35)\\
0 &  t\in [35,50).\end{cases}
\end{align}
\begin{figure}[!h]
    \centering
\includegraphics[width=\linewidth]{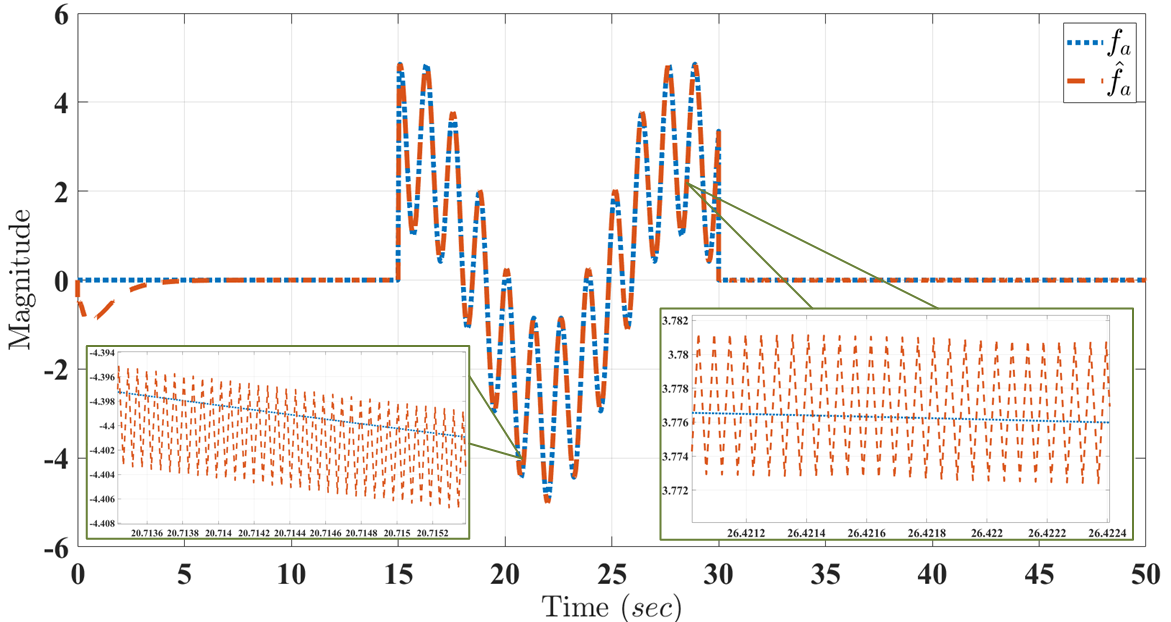}
    \caption{Graph of estimation of actuator fault vector ($f_a$)}
    \label{fig 1 est fa}
\end{figure}
Further, LMI~\eqref{sec3 thm1 LMI} is solved in MATLAB toolbox to compute the parameters of the unknown input observer~\eqref{sec 3 eq 5 obs} by considering $\beta=100,\epsilon=0.1$.
The obtained value is showcased in Table~\ref{tab 1}. It infers that the LMI approach of~\cite{ref_UIO_est_noiseless} is not suitable in this case. Thus, the effectiveness of the proposed LMI method is emphasised.
\begin{table}[!h]
    \centering
    \caption{Encapsulating LMI results}
    \label{tab 1}
    \begin{tabular}{|c|c|c|}
    \hline
     LMIs    &LMI~\eqref{sec3 thm1 LMI}&~\cite[LMI~(14)]{ref_UIO_est_noiseless}  \\
     \hline
         $\sqrt{\mu}$&$2.5399\times 10^{-4}$ &Infeasible\\ \hline    
    \end{tabular}
\end{table}
\\The obtained parameters are outlined as follows:
$$~\mathcal{K}=\begin{bmatrix}
    -0.0261  &  0.6630&    4.1125\\
    0.8767   &11.3468 & -23.8897\\
    0.3344   & 0.4541 & -17.2965\\
   -0.6458   &-0.7730 &  36.4162\\
    1.2139   &-0.6630 &  -4.1125
\end{bmatrix};~L_2=\begin{bmatrix}
    0.0000 \\  66.5084  \\183.6139
\end{bmatrix}^\top.$$
One can deduce the following matrices by utilising the algorithm shown in Figure~\ref{Fig Algo Parameter computation}:
\begin{equation}\small\nonumber
   N=\begin{bmatrix}
0.0261 & 0.3370 &-4.1125&-4.1125&  0.0261\\
-25.1767 &-11.9718&48.1897&23.8897 &-0.8767\\
-6.8344&-0.4541 &23.7965&  17.9632& -0.3344\\
13.6458& 0.7730&-49.4162&-36.7496 & 0.6458\\
-1.2139 &-0.3370 & 4.1125 &4.1125 &  -1.2139
\end{bmatrix};
\end{equation}
\begin{equation}\nonumber
J=\begin{bmatrix}
    -0.0000 &   0.8315  &  1.3708\\
    0.0000  &  5.3609   & 0.1368\\
    0.0000  &  0.2271   &-3.3766\\
   -0.0000  & -0.3865   & 7.6943\\
    0.0000  & -0.8315   &-1.3708
\end{bmatrix}.   
\end{equation}

\begin{figure}[!h]
    \centering
\includegraphics[width=\linewidth]{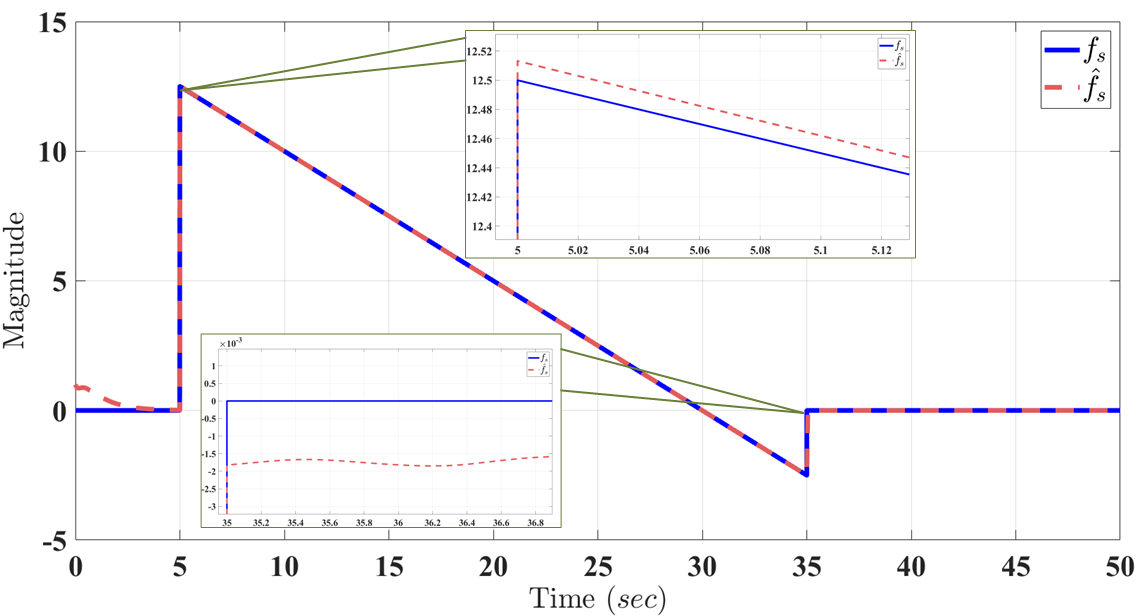}
    \caption{Plot of estimation of sensor fault vector ($f_s$)}
    \label{fig 1 est fs}
\end{figure}
With all these aforementioned matrices, the unknown input observer~\eqref{sec 3 eq 5 obs} is implemented in the MATLAB environment for estimating the faults and states simultaneously. The graphical representation of the estimated and actual actuator fault is portrayed in Figure~\ref{fig 1 est fa}. It depicts the accuracy of the estimation of actuator faults. It is mainly because of the selection of learning rate $\beta$. Figure~\ref{fig 1 est fs} illustrates the plot of the estimated and actual sensor fault signal. Further, the estimation errors of $f_a$ and $f_s$ are displayed in Figure~\ref{fig 1 est error fa fs}. The spikes shown in the graph of $\tilde{f}_a$ indicate sudden changes in actuator fault signals. In addition to this, one can notice that the reconstruction of the actuator faults, as well as the sensor fault, is sufficiently accurate, and reliable. Hence, the performance of the observer~\eqref{sec 3 eq 5 obs} is validated.

\begin{figure}[!h]
    \centering
\includegraphics[width=\linewidth]{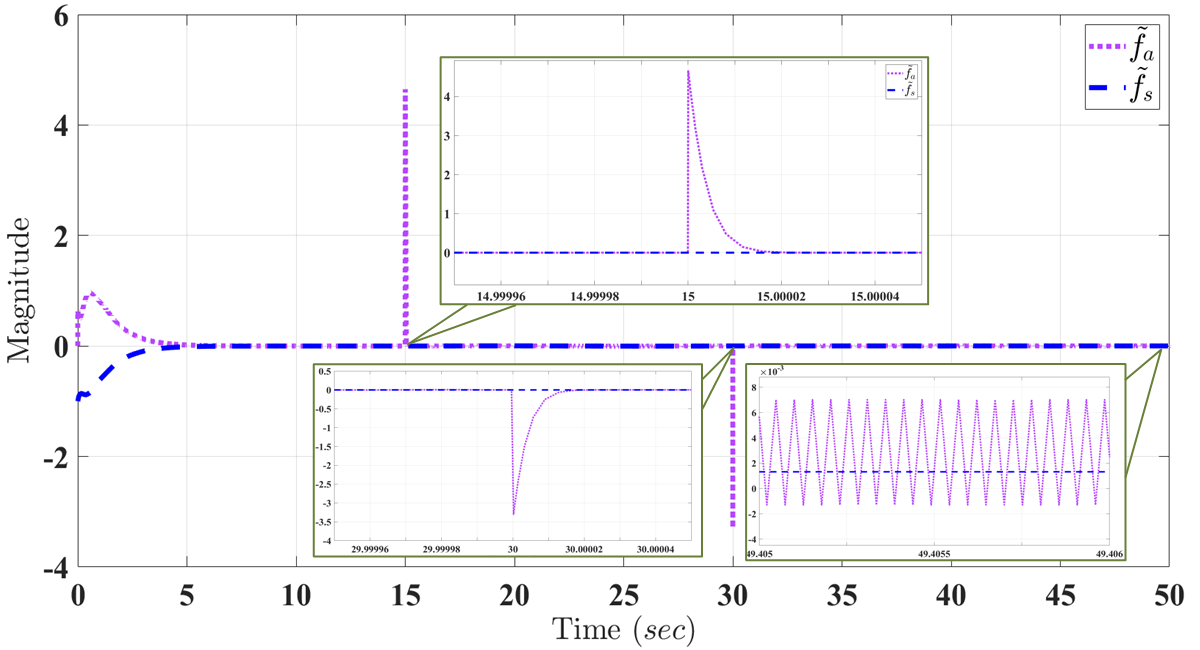}
    \caption{Plot of estimation error of $f_a$ and $f_s$}
    \label{fig 1 est error fa fs}
\end{figure}

\subsection{Scenario of disturbance-affected system}
In order to analyse the performance of the proposed observer~\eqref{sec 4 eq 5 obs}, the earlier-stated example is considered. The system dynamics and measurements are presumed to be affected by external disturbances/noise $\omega_1=0.2\sin (10t)$ and $\omega_2=0.1 \sin(10t)$ for $\forall t\in [0,50]$, respectively, along with
$E_1=\begin{bmatrix}
   0 \\0 \\0 \\1 
\end{bmatrix}$ and
$D_1=\begin{bmatrix}
   0.1 \\-0.02  \\0 
\end{bmatrix}$. Thus, the state-space form described in~\eqref{sec 4 eq 1} is utilised to represent this example by considering $\omega=\begin{bmatrix}
    \omega_1\\\omega_2
\end{bmatrix}$, $E=\begin{bmatrix}
    E_1 &\mathbb{O}
\end{bmatrix}$ and $E=\begin{bmatrix}
    \mathbb{O}&D_1
\end{bmatrix}$.

Let us presume $\beta=100,~\epsilon=0.0112,~\delta=5$.
Then, one can compute the following parameters by solving LMI~\eqref{sec4 thm2 LMI} in the MATLAB toolbox:
$\sqrt{\mu}=0.0324;~L_2=\begin{bmatrix}
    6.3355 \\  30.7302\\    4.7777
\end{bmatrix}^\top;~\mathcal{K}=\begin{bmatrix}
    1.5587 &   7.9111&   12.5406\\
   28.1685 & 144.2130&  124.2935\\
  -26.3414 &-116.4108& -203.3290\\
   52.6611 & 232.7129&  408.5888\\
  -69.1253 &-350.9884& -710.9153
\end{bmatrix}$. 
Similar to the previous segment, we deduce the subsequent matrices: 
\begin{equation}\scriptsize\nonumber
\begin{split}
   N&=\begin{bmatrix}
-1.5587& -6.9111&  -12.5406 & -12.5406&   -1.5587\\
  -52.4685 &-144.8380  &-99.9935& -124.2935 & -28.1685\\
   19.8414 & 116.4108&  209.8290 & 203.9957  & 26.3414\\
  -39.6611 &-232.7129 &-421.5888& -408.9222 & -52.6611\\
   69.1253 & 349.9884&  710.9153&  710.9153&   69.1253
\end{bmatrix};\\J&=\begin{bmatrix}
   0.0000  &  4.4555  &  4.1802\\
    0.0000  & 71.7940 &  49.5312\\
   -0.0000 & -58.2054 & -65.3875\\
    0.0000 & 116.3565 & 131.7518\\
   -0.0000 &-175.9942 &-236.9718
\end{bmatrix}.   
\end{split}
\end{equation}
\begin{table}[!h]
    \centering
    \caption{Summarising the value of $\sqrt{\mu}$ for different approaches}
    \label{sec5.2 tab 1 mu comparision}
    \begin{tabular}{|c|c|c|c|}
    \hline
     LMIs    &LMI~\eqref{sec4 thm2 LMI}&~\cite[LMI~(66)–(67) and
(69)]{gao2022fast}&~\cite{gao_2016_estimation_fault}  \\
     \hline
         $\sqrt{\mu}$&$0.0324$ &$1.736$&$1.76$\\ \hline    
    \end{tabular}
\end{table}
Table~\ref{sec5.2 tab 1 mu comparision} illustrates the comparison of the noise attenuation level obtained from the different LMI approaches. It highlights that the proposed LMI condition provides a better noise compensation than the existing methods.

Further, by utilising these above-mentioned matrices, the unknown input observer~\eqref{sec 4 eq 5 obs} is deployed in the MATLAB environment to estimate the faults and states simultaneously.

\begin{table}[!h]
    \centering
    \caption{Settling time (in$~{sec}$) for estimation error $\tilde{f}_a$ in several cases}
    \label{sec5.2 tab 1}
    \begin{tabular}{|c|c|c|c|c|}
\hline
Error & Different methods & \textbf{Case 1} & \textbf{Case 2} & \textbf{Case 3} \\ \hline
\multirow{3}{*}{$\tilde{f}_a$} & FAUIO~\eqref{sec 4 eq 5 obs}  &0.00015  &0.00013  &0.00011  \\ \cline{2-5} 
                  &\cite{gao2022fast}  &0.099  &0.2396 &0.21  \\ \cline{2-5} 
        &\cite{gao_2016_estimation_fault}  & 2.791 & 2.04 & 0.9\\ \hline
\end{tabular}
\end{table}

For the lucidity of the presentation, the authors divided the estimation analysis under several fault scenarios into the ensuing cases:
\begin{enumerate}[I]
    \item \textbf{Case 1}: Impulsive faults
    \\ Let us consider that the aforementioned system is subjected to the ensuing impulsive faults:
    
    \begin{align}\nonumber   
f_a(t)&=\begin{cases}
0, & t\in [0,20)\\
0.1(t-10), & t\in [20,20.1)\\
0 &  t\in [20.1,50).\end{cases}
\end{align}
\begin{align}\nonumber   
f_s(t)=\begin{cases}
0, & t\in [0,30)\\
0.1(t-10), & t\in [30,30.1)\\
0 &  t\in [30.1,50).\end{cases}
\end{align}
\begin{figure}[!h]
    \centering
\includegraphics[width=\linewidth]{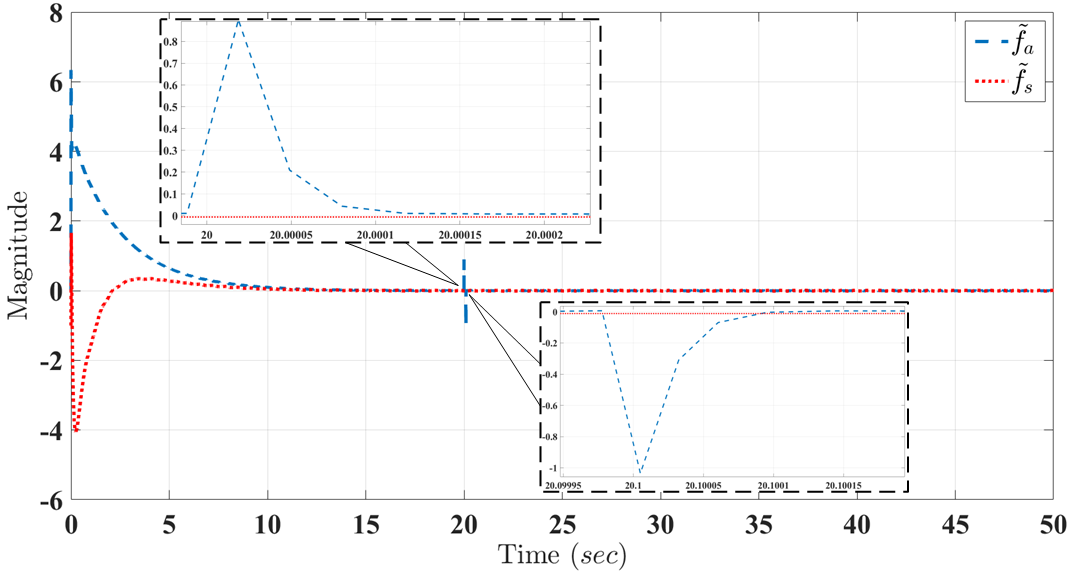}
    \caption{Graphical representation of estimation error $\tilde{f}_a$ and $\tilde{f}_s$ in \textbf{Case 1}}
    \label{sec 5.2 fig1 case 1}
\end{figure}
Through the utilisation of the proposed observer~\eqref{sec 4 eq 5 obs} in MATLAB, the faults are estimated. The plot of the obtained estimation error of faults is illustrated in Figure~\ref{sec 5.2 fig1 case 1}. It depicts the asymptotic convergence of both error $\tilde{f}_a$ and $\tilde{f}_s$. In addition to this, it infers that the proposed observer estimates both faults rapidly as well as accurately, which can be shown in the sequel.
\begin{figure}[!h]
    \centering
\includegraphics[width=\linewidth]{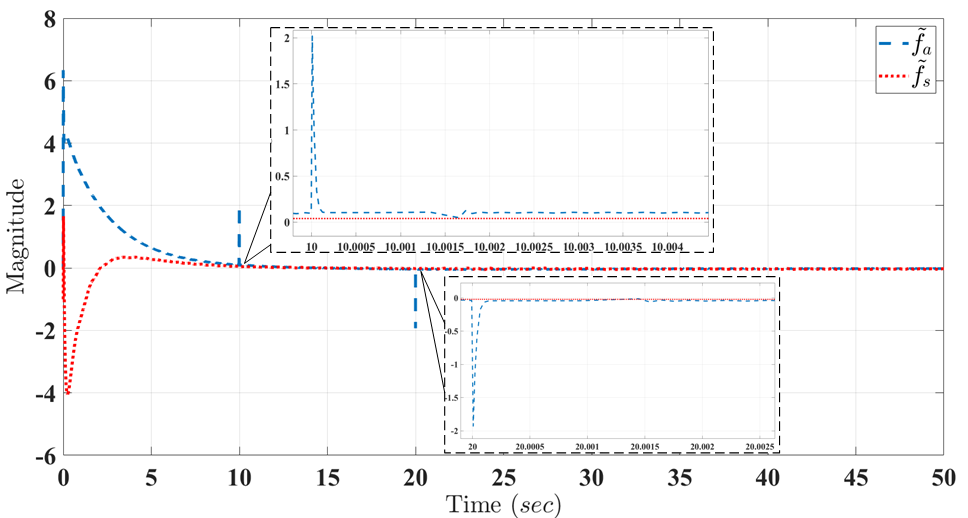}
    \caption{Graphical representation of estimation error $\tilde{f}_a$ and $\tilde{f}_s$ in \textbf{Case 2}}
    \label{sec 5.2 fig1 case 2}
\end{figure}
\item \textbf{Case 2}: Abrupt faults
    \\ The previously indicated system is affected by the following fault signals:
    \begin{align}\nonumber   
f_a(t)&=\begin{cases}
0, & t\in [0,10)\\
2, & t\in [10,20)\\
0 &  t\in [20,50).\end{cases}
\\   
f_s(t)&=\begin{cases}
0, & t\in [0,30)\\
2, & t\in [30,35)\\
0 &  t\in [35,50).\end{cases}\nonumber
\end{align}
Analogous to the previous case, the plot of the estimation error of faults is obtained in MATLAB Simulink, and showcased in Figure~\ref{sec 5.2 fig1 case 2}. The asymptotic convergence of both error $\tilde{f}_a$ and $\tilde{f}_s$ is shown in Figure~\ref{sec 5.2 fig1 case 2}. 
Thus, the performance of the proposed observer in this case is validated. The comment regarding estimation accuracy and settling time of estimation error is provided in the next part of this segment.
\begin{figure}[!h]
    \centering
\includegraphics[width=\linewidth]{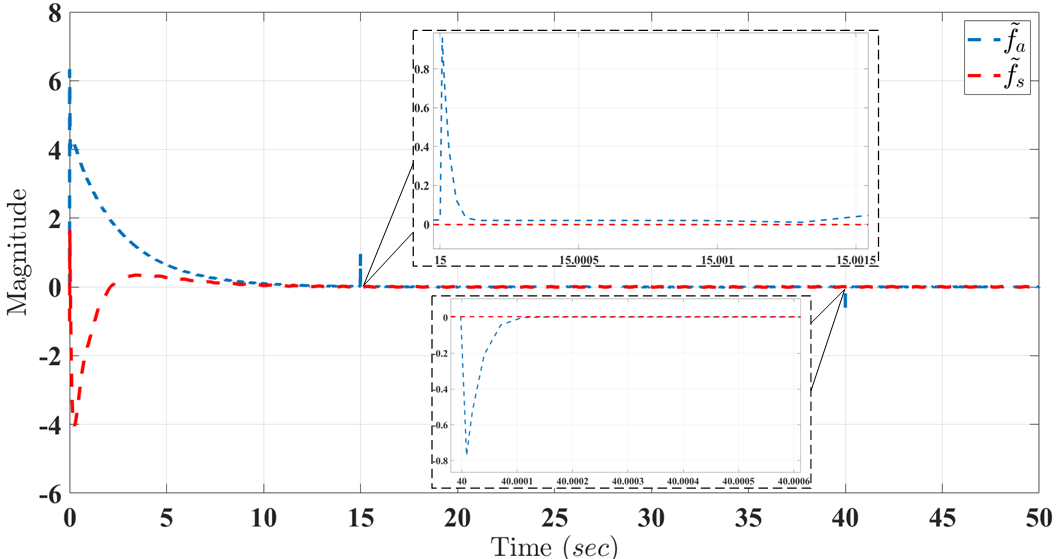}
    \caption{Graphical representation of estimation error $\tilde{f}_a$ and $\tilde{f}_s$ in \textbf{Case 3}}
    \label{sec 5.2 fig1 case 3}
\end{figure}
    \item \textbf{Case 3} Sinusoidal faults
    \\Let us presume that the earlier-stated system is under the impact of the following fault signals:
    \begin{align}  
f_a(t)&=\begin{cases}
0, & t\in [0,15)\\
\sin (0.5t)+0.2\cos (5t), & t\in [15,40)\\
0 &  t\in [40,50).\end{cases}\nonumber 
\\f_s(t)&=\begin{cases}
0, & t\in [0,15)\\
\sin (0.5t)+0.2\cos (5t),, & t\in [15,35)\\
0 &  t\in [35,50).\end{cases}\nonumber 
\end{align}
Similar to the preceding case, the behaviour of the estimation error of faults achieved in MATLAB Simulink is showcased in Figure~\ref{sec 5.2 fig1 case 3}. It portrays the asymptotic convergence of $\tilde{f}_a$ and $\tilde{f}_s$ with fast convergence and high accuracy. 
Thus, the performance of the proposed observer in~\textbf{Case 3} is verified.
\end{enumerate}

Further, the settling time required for the convergence of $\tilde{f}_a$ for each case is summarised in Table~\ref{sec5.2 tab 1}, which highlights the proposed observer provides a rapid convergence of estimation error $\tilde{f}_a$. The root mean square of the fault estimation error (RMSE) in each case is outlined in Table~\ref{sec5.2 tab 2}, and it showcased the accuracy of the estimation over the existing approaches. Thus, the proposed methodology performs efficiently in the case of impulsive faults.
\begin{table}[!h]
\caption{RMSE values of estimation errors $\tilde{f}_a$ and $\tilde{f}_s$ in several cases}
\label{sec5.2 tab 2}
\begin{tabular}{|c|c|c|c|c|}
\hline
Error             & Different methods & \textbf{Case 1} & \textbf{Case 2} & \textbf{Case 3} \\ \hline
\multirow{3}{*}{$\tilde{f}_a$} &FAUIO~\eqref{sec 4 eq 5 obs}  & 0.0076               &      0.0340            & 0.0189     \\ \cline{2-5}&  \cite{gao2022fast}     &           0.02      &       0.0596          &          0.04    \\ \cline{2-5}& \cite{gao_2016_estimation_fault}                  & 0.196                &  0.116               &      0.121           \\ \hline
\multirow{3}{*}{$\tilde{f}_s$} &FAUIO~\eqref{sec 4 eq 5 obs}  & 0.0078                &    0.0295            & 0.0120                \\ \cline{2-5} &  \cite{gao2022fast}&          0.018       &  0.0315               & 0.05                \\ \cline{2-5}  & \cite{gao_2016_estimation_fault}  &  0.110               &   0.105              &  0.102               \\ \hline
\end{tabular}
\end{table}

\section{Conclusion}\label{sec conclusion}
This letter is dedicated to the establishment of an FAUIO-based estimation approach for the rapid and accurate reconstruction of states and faults of nonlinear systems. It is achieved by the development of a novel LMI condition that aids in determining the parameters of the FAUIO such that the estimation errors of states and faults are asymptotically stable. The derived LMI condition is obtained by combining the reformulated Lipschitz property, a new variant of Young inequality, and a well-known LPV approach. The formulated LMI criterion is less conservative than the existing ones, which is highlighted using a numerical example. Further, this novel FAUIO approach is deployed for the disturbance-affected nonlinear system, and a new LMI condition is designed to estimate the faults and states with optimal noise attenuation. The performance of the developed FAUIO observer is validated by using the application of a single-link robotic arm manipulator in the MATLAB environment. From a future perspective, this methodology can be extended for discrete-time nonlinear systems.
In addition to this, the proposed approach can be combined with the existing fault-tolerant controller (FTC) methods to design a novel FAUIO-based FTC methodology.

\section{ Acknowledgement}
The authors would like to express their gratitude to Dr Ali Zemouche and Dr. Maraoune Alma, University of Lorraine, CRAN, Nancy, France for their technical advice and detailed conversations. The authors are also thankful to Dr N. M. Singh, EED, VJTI, Mumbai, India,  for his technical support during manuscript preparation.

 \bibliographystyle{elsarticle-num}
 \bibliography{cas-refs}

\printcredits




\end{document}